%% file: paper.tex
\documentclass[12pt]{article}

\input{myparentheses.tex}

\input{mycommands_shared.tex}

\input{mycommands.tex}

\input{myplots.tex}

\input{mystructure.tex}

\usepackage[
    a4paper,
    top=3cm,
    bottom=4cm,
    left=2.5cm,
    right=2.5cm,
]{geometry}

\providebool{cleanbuild}
\setbool{cleanbuild}{true}

\begin{document}

\input{tex/00_TitlePage.tex}
\clearpage

\section{Introduction}
\label{sec:introduction}
\input{tex/01_Introduction.tex}

\section{Background}
\label{sec:background}
\input{tex/02_Background.tex}

\section{Oblique projections of covariance matrices}
\label{sec:obliqueProjections}
\input{tex/03_ObliqueProjections.tex}

\section{Covariance structures}
\label{sec:covarianceStructures}
\input{tex/04_CovarianceStructures.tex}

\section{Connections and transferrable methods}
\label{sec:connections}
\input{tex/05_Methods.tex}

\section{Other connections and outlook}
\label{sec:outlook}
\input{tex/06_Outlook.tex}

\clearpage
\appendix

\section{Oblique projections details}
\label{sec:obliqueProjectionsDetails}
\input{tex/A_Projections.tex}

\section{\MP{} inverse}
\label{sec:MPInverse}

\input{tex/A_TechnicalDetails.tex}

\section{Proofs}
\label{sec:proofs}
\input{tex/A_Proofs.tex}

\section{Notation comparison}
\label{sec:notationComparison}
\input{tex/A_Notation.tex}

\clearpage
\bibliographystyle{apalike}
\bibliography{literature,literature_new,literature_comp}

\end{document}

%% file: myparentheses.tex
\newcommand{\normalleftrightnull}{\delimitershortfall 5pt \delimiterfactor 901}
\newcommand{\normalleftright}{\normalleftrightnull\aftergroup\normalleftrightnull}

\newcommand{\midleftrightnull}{\delimitershortfall 15pt \delimiterfactor 700}
\newcommand{\midleftright}{\midleftrightnull\aftergroup\midleftrightnull}

\newcommand{\smallleftrightnull}{\delimitershortfall 30pt \delimiterfactor 400}
\newcommand{\smallleftright}{\smallleftrightnull\aftergroup\smallleftrightnull}

\newcommand{\noleftrightnull}{\delimitershortfall 500pt \delimiterfactor 0}
\newcommand{\noleftright}{\noleftrightnull\aftergroup\noleftrightnull}

\midleftright %

\newcommand{\myleft}{\mathopen{}\mathclose\bgroup\left}
\newcommand{\myright}{\aftergroup\egroup\right}

\newcommand{\lr}[1]{\myleft(#1\myright)}

\newcommand{\set}[1]{\myleft\{#1\myright\}}

\newcommand{\setm}[2]{\myleft\{#1 \;\middle\vert\; #2\myright\}}

\newcommand{\abs}[1]{\left|{#1}\right|}

\newcommand{\brackets}[1]{\myleft\lbrack#1\myright\rbrack}

\newcommand{\halfopen}[1]{\myleft\lbrack#1\myright)}

%% file: mycommands_shared.tex
\usepackage{amsmath, amssymb} %
\usepackage{mathtools} %
\usepackage{bbm} %

\newcommand{\Acal}{\mathcal{A}}

\newcommand{\Dcal}{\mathcal{D}}

\newcommand{\Mcal}{\mathcal{M}}

\newcommand{\Pcal}{\mathcal{P}}

\newcommand{\Scal}{\mathcal{S}}

\newcommand{\expmeasure}{\Lambda}
\newcommand{\expdens}{\lambda}

\newcommand{\inftyvec}{\boldsymbol{\infty}}
\newcommand{\zerovec}{\mathbf{0}}
\newcommand{\onevec}{\mathbf{1}}

\newcommand{\indic}[1]{\mathbbm{1}_{#1}}

\newcommand{\kev}[1]{\mathbf{e}_{#1}}
\newcommand{\ek}[1][k]{\kev{#1}}

\newcommand{\Var}{\mathrm{Var}}

\newcommand{\Cov}{\mathrm{Cov}}
\providecommand{\P}{}
\renewcommand{\P}[1][]{\mathbb{P}_{#1}}
\newcommand{\E}[1][]{\mathbb{E}_{#1}}

\newcommand{\normal}{\mathcal{N}}

\newcommand{\N}{\mathbb{N}}
\newcommand{\R}{\mathbb{R}}
\newcommand{\Rdd}[1][d]{\R^{#1 \times #1}}
\newcommand{\Rd}[1][d]{\R^{#1}}

\newcommand{\slr}[1]{^{\lr{#1}}}

\newcommand{\T}{^\top\!}
\newcommand{\inv}{^{-1}}

\newcommand{\pinv}{^+}
\newcommand{\pinvT}{^{+\top}}
\newcommand{\orth}{^{\perp}}
\newcommand{\ocompl}[1]{#1\orth}
\newcommand{\sonevec}{^{\onevec}}

\newcommand{\without}[1]{\setminus\set{#1}}
\newcommand{\restrictto}[1]{\big\vert_{#1}}

\newcommand{\excondind}[3]{#1 \perp_e #2 \mid #3}
\newcommand{\simiid}{\overset{\mbox{\tiny i.i.d.}}{\sim}}
\newcommand{\deq}{\stackrel{d}{=}}

\newcommand{\pdet}[1]{\abs{#1}_+}

\newcommand{\trace}{\mathrm{tr}}
\newcommand{\laspan}{\mathrm{span}}
\newcommand{\image}{\mathrm{Im}}
\newcommand{\kernel}{\mathrm{ker}}
\newcommand{\diag}{\mathrm{diag}}

\newcommand{\Id}{\mathrm{I}}

\newcommand{\vario}{\gamma}

\newcommand{\limit}[2][\infty]{\lim_{#2\rightarrow #1}}
\newcommand{\half}{\tfrac{1}{2}}

\newcommand{\myFunctionDefinitionInline}[5]{
  #1: \; & #2 &\longrightarrow #3
  , \quad
  #4 \longmapsto #5
}

\newcommand{\CAP}{\expandafter\MakeUppercase}
\newcommand{\HR}{Hüsler--Reiss}

\newcommand{\MP}{Moore--Penrose}

\newcommand{\multGP}{multivariate generalized Pareto}
\newcommand{\MultGP}{Multivariate generalized Pareto}

\newcommand{\psd}{positive semidefinite}

\newcommand{\cnd}{conditionally negative definite}

\newcommand{\spieceasi}{\texttt{SPIEC-EASI}}
\newcommand{\gcoda}{\texttt{gCoda}}
\newcommand{\sparcc}{\texttt{SparCC}}
\newcommand{\eglearn}{\texttt{eglearn}}
\newcommand{\cglearn}{\texttt{cglearn}}
\newcommand{\cclasso}{\texttt{CCLasso}}
\newcommand{\easycoda}{\texttt{easyCODA}}
\newcommand{\gemas}{\texttt{gemas}}

\newcommand{\robCompositions}{\texttt{robCompositions}}

%% file: mycommands.tex
\usepackage{amsmath, amssymb} %
\usepackage{mathtools} %
\usepackage{bbm} %

\newcommand{\setpsd}[1]{\Pcal_{#1}}

\newcommand{\setpsdt}[1]{\Pcal^0_{#1}}

\newcommand{\setpsdany}{\Scal}

\newcommand{\setMv}[1]{\Mcal_{#1}}

\newcommand{\setMvt}[1]{\Mcal^0_{#1}}

\newcommand{\setcnd}{\Dcal}

\newcommand{\PX}[1]{P_{#1}}

\newcommand{\PXY}[2]{P_{#1#2}}
\newcommand{\PXYo}[2]{P_{#1#2\orth}}
\newcommand{\Pxyo}[2]{P_{#1#2\orth}}

\newcommand{\PUV}{\PXY{U}{V}}

\newcommand{\PUUo}{\PXYo{U}{U}}
\newcommand{\PVVo}{\PXYo{V}{V}}

\newcommand{\PUVo}{\PXYo{U}{V}}
\newcommand{\PUWo}{\PXYo{U}{W}}
\newcommand{\PVUo}{\PXYo{V}{U}}

\newcommand{\PWUo}{\PXYo{W}{U}}

\newcommand{\pXYo}[2]{\pi_{#1#2\orth}}
\newcommand{\pxyo}[2]{\pi_{#1#2\orth}}

\newcommand{\pUUo}{\pXYo{U}{U}}
\newcommand{\pVVo}{\pXYo{V}{V}}

\newcommand{\pUVo}{\pXYo{U}{V}}
\newcommand{\pUWo}{\pXYo{U}{W}}
\newcommand{\pVUo}{\pXYo{V}{U}}

\newcommand{\puvo}{\pxyo{u}{v}}

\newcommand{\pvvo}{\pxyo{v}{v}}
\newcommand{\pevo}{\pxyo{\onevec}{v}}
\newcommand{\pveo}{\pxyo{v}{\onevec}}
\newcommand{\peeo}{\pxyo{\onevec}{\onevec}}
\newcommand{\pexo}[1]{\pxyo{\onevec}{#1}}
\newcommand{\pxeo}[1]{\pxyo{#1}{\onevec}}
\newcommand{\pxxo}[1]{\pxyo{#1}{#1}}

\newcommand{\Puvo}{\Pxyo{u}{v}}
\newcommand{\Puwo}{\Pxyo{u}{w}}
\newcommand{\Pevo}{\Pxyo{\onevec}{v}}
\newcommand{\Pveo}{\Pxyo{v}{\onevec}}
\newcommand{\Peeo}{\Pxyo{\onevec}{\onevec}}

\newcommand{\Pexo}[1]{\Pxyo{\onevec}{#1}}

\newcommand{\LR}{\mathrm{LR}}
\newcommand{\ALR}[1]{\mathrm{ALR}_#1}
\newcommand{\CLR}{\mathrm{CLR}}

\newcommand{\cl}{\mathrm{C}}
\newcommand{\lcl}{\tilde{\mathrm{C}}}
\newcommand{\simplex}{\Delta}

\newcommand{\ai}[1]{\underline{#1}}

%% file: myplots.tex
\usepackage{etoolbox} %

\providecommand{\providesetbool}[2]{%
  \ifcsundef{if#1}{%
    \newbool{#1}%
    \setbool{#1}{#2}%
  }{}%
}

\usepackage{tikz}
\usetikzlibrary{matrix}
\usetikzlibrary{external}
\usetikzlibrary{arrows.meta}

\providesetbool{externalizetikz}{false}

\ifbool{externalizetikz}{%
  \tikzsetexternalprefix{externalized/}
  \tikzexternalize[]
}{}

\usepackage{float} %

\providesetbool{useXdiagrams}{true}

\providebool{ignoretikz}
\setbool{ignoretikz}{false}

\newcommand{\redtt}[1]{{\color{red}\ttfamily #1}}

\newcommand{\inputtikzwithfilename}[1]{%
  \ifbool{ignoretikz}{%
    \redtt{Ignoring tikz: \detokenize{#1}}%
  }{%
    \IfFileExists{#1}{%
      \tikzsetnextfilename{#1}%
      \input{#1}%
    }{%
      \redtt{File not found: \detokenize{#1}}%
    }%
  }%
}

\newcommand{\inputtikz}[1]{%
  \input{tikz/#1}%
}
\newcommand{\inputtikzR}[1]{
  \input{tikzR/#1}%
}

%% file: mystructure.tex
\usepackage{etoolbox} %
\providebool{cleanbuild}

\usepackage{placeins} %

\usepackage{amsthm}

\usepackage{xr-hyper}
\usepackage[hidelinks]{hyperref}
\usepackage[capitalize,nameinlink]{cleveref} %

\crefformat{equation}{#2(#1)#3}
\crefrangeformat{equation}{#3(#1)#4--#5(#2)#6}
\crefmultiformat{equation}{#2(#1)#3}{ and~#2(#1)#3}{, #2(#1)#3}{, and~#2(#1)#3}
\Crefrangeformat{equation}{#3(#1)#4--#5(#2)#6}
\Crefmultiformat{equation}{#2(#1)#3}{ and~#2(#1)#3}{, #2(#1)#3}{, and~#2(#1)#3}

\crefname{algocf}{Algorithm}{Algorithms}
\Crefname{algocf}{Algorithm}{Algorithms}

\AddToHook{cmd/appendix/before}{\crefalias{section}{appendix}}

\crefformat{figure}{#2Figure~#1#3}

\usepackage{aliascnt}
\usepackage{xparse}
\let\oldtheorem\newtheorem
\RenewDocumentCommand{\newtheorem}{m o m o}{%
    \IfValueTF{#2}{%
        \IfNoValueTF{#4}{%
            \newaliascnt{#1}{#2}%
            \oldtheorem{#1}[#1]{#3}%
            \aliascntresetthe{#1}%
        }{%
            \oldtheorem{#1}[#2]{#3}[#4]%
        }%
    }{%
        \IfNoValueTF{#4}{%
            \oldtheorem{#1}{#3}%
        }{%
            \oldtheorem{#1}{#3}[#4]%
        }%
    }%
}

\makeatletter
\newcommand{\DeclareTodoCref}[2]{%

  \newcounter{todoref@#1}%

  \expandafter\renewcommand\csname theHtodoref@#1\endcsname{%
    todo.#1.\arabic{todoref@#1}%
  }%

  \crefformat{todoref@#1}{\todo{#2}}%
  \Crefformat{todoref@#1}{\todo{#2}}%

  \AtBeginDocument{%
    \refstepcounter{todoref@#1}%
    \label{#1}%
  }%
}
\makeatother

\DeclareTodoCref{todo}{ref}
\DeclareTodoCref{sec:todo}{ref section}
\DeclareTodoCref{result:todo}{ref result}
\DeclareTodoCref{lemma:todo}{ref result}
\DeclareTodoCref{theorem:todo}{ref result}
\DeclareTodoCref{fig:todo}{ref figure}
\DeclareTodoCref{table:todo}{ref table}

\usepackage[sort]{natbib}

\makeatletter \renewcommand\hyper@natlinkbreak[2]{#1} \makeatother

\numberwithin{equation}{section}

\theoremstyle{plain}
\newtheorem{theorem}{Theorem}[section]
\newtheorem{proposition}[theorem]{Proposition}
\newtheorem{lemma}[theorem]{Lemma}
\newtheorem{corollary}[theorem]{Corollary}
\newtheorem{definition}[theorem]{Definition}

\theoremstyle{definition}

\theoremstyle{remark}
\newtheorem{remark}[theorem]{Remark}

\newcommand{\provenlabel}[1]{%
  \label{#1}%
  \ifbool{cleanbuild}{}{%
    \makeatletter%
    \ifcsname r@proof:#1\endcsname%
      \hyperlink{proof:#1}{(See Proof.)}%
    \fi%
    \makeatother%
  }%
}

\newcommand{\proofref}[1]{%
  \label{proof:#1}%
  Proof of \hypertarget{proof:#1}{\cref{#1}}%
}

\usepackage{enumitem}
\setlist[enumerate,1]{label={(\arabic*)}}

\RequirePackage{algorithm}%
\RequirePackage{algorithmicx}%
\RequirePackage{algpseudocode}%
\RequirePackage{listings}%

%% file: tex/00_TitlePage.tex
\title{An Explicit Link between Extreme Value Theory and Compositional Data Analysis}

\author{%
    Manuel Hentschel, University of Geneva
    \and
    Sebastian Engelke, University of Geneva
}

\date{}

\maketitle

\begin{abstract}
    \noindent%
    \input{tex/00_Abstract.tex}
\end{abstract}

%% file: tex/00_Abstract.tex
Extreme value theory and compositional data analysis both study settings
where relative information plays a central role.
In multivariate extreme value theory,
threshold exceedance limits satisfy homogeneity properties that separate
the radial size of an extreme event from its relative profile.
In compositional data analysis,
positive vectors are analysed up to multiplicative scale,
and inference is based on ratios or log-ratios between components.
Consequently,
both fields have developed several covariance and dependence representations
of the underlying relative structure.
In the Hüsler--Reiss model for extremes,
these include variogram, covariance, and precision parametrizations.
In compositional data analysis,
analogous representations arise from pairwise log-ratios,
centred log-ratios,
and additive log-ratios.
We establish an explicit link between the two fields 
that relates these different representations by a small set of simple transformations,
including oblique projections, Hüsler--Reiss inverses, and the variogram map.
From a methodological perspective, leveraging this algebraic connection enables the transfer of statistical approaches from one field to the other. For instance,
we introduce intrinsic logistic-normal graphical models for compositional data,
which are based on Hüsler--Reiss graphical models for extremes.
Conversely,
we explore how dimensionality reduction methods from compositional data analysis
can be applied to the analysis of multivariate extremes.

%% file: tex/01_Introduction.tex
Multivariate extreme value theory provides models for the joint behaviour
of rare events,
such as extreme weather events \citep{PascheLamEngelke2026},
financial crises \citep{poonRockingerTawn2004},
or large insurance claims \citep{beirlantEtAl2004_chapter}.
A common approach in extreme value theory is to study threshold exceedances through multivariate
generalized Pareto distributions \citep{rootzen2006},
which arise as the limiting distributions of rescaled threshold exceedances.
One way to describe
the dependence structure of a multivariate generalized Pareto distribution $Y = (Y_1, \dots ,Y_d)$ in $\mathbb R^d$
is by the extremal variogram,
defined as the matrix of variances of pairwise differences between entries of the
$v$-extremal functions \citep{hentschel2025},
\begin{equation}
    \label{eq:intro_extremal_variogram}
    \Gamma^v_{ij}
    =
    \Var(Y_i - Y_j | v\T Y > 0), \quad i,j = 1,\dots, d
    .
\end{equation}
For the \HR{} model \citep{hueslerReiss1989},
this variogram is independent of the choice of $v$
and fully parametrizes the distribution.
The same matrix determines several covariance and precision representations,
including the covariance matrices of extremal functions
and the singular precision matrix used in \HR{} graphical models
\citep{engelke2020,hentschel2023}.

Compositional data analysis provides statistical methods
for data that describe the relative contributions of several parts to a whole.
Example applications include chemical compositions in geology \citep{egozcueEtAl2024},
species abundances in ecological samples \citep{BillheimerEtAl2001},
and demographic proportions in population studies \citep{LloydPawlowskyEgozcue2012}.
Since only relative information is meaningful in these datasets,
each observation is commonly divided by its total,
yielding a vector of proportions that lives on the simplex,
to which many standard modelling and inference methods for multivariate data are inapplicable.
A common solution to this problem is to work
with log-ratio transformations of the data
\citep{aitchison_statistical_1986,greenacre_compositional_2021},
and the covariance structures of the transformed data.
A central dependence measure in this setting is the compositional variation array,
defined as the matrix of pairwise variances of log-ratios,
\begin{equation}
    \label{eq:intro_variation_array}
    \Gamma_{ij}
    =
    \Var\{\log(x_i / x_j)\}, \quad i,j = 1,\dots, d
    ,
\end{equation}
where $x = (x_1,\dots, x_d)$ is a random vector of proportions, supported on the simplex.
Other important covariance structures include the covariance matrix of centred log-ratios,
and the covariance matrices of additive log-ratio coordinates \citep{aitchison_statistical_1986}.
These objects are again different representations of the same relative
covariance structure.
At a formal level, this is closely related to the covariance structures
appearing in \HR{} extreme value theory:
log-ratio transformed data in compositional data analysis
correspond to extremal functions in extreme value theory,
pairwise log-ratios correspond to differences between coordinates of extremal functions,
and the compositional variation array to the extremal variogram.

Our paper makes this connection precise by establishing a formal link between extreme value theory and compositional data analysis. Leveraging this connection enables the transfer of statistical models and methods between the two fields.
As the mathematical backbone,
in \cref{sec:obliqueProjections} we develop a framework to describe and relate the
covariance and precision matrices of
random vectors which are supported on general subspaces.
In \cref{sec:covarianceStructures},
we specialize this framework to projections whose kernel is
spanned by the one-vector,
and prove a variogram representation result,
relating negative definite variograms to covariance matrices on
hyperplanes complementary to the one-vector.
Next, we show that the extremal variogram from \cref{eq:intro_extremal_variogram},
the covariance matrices of extremal functions,
and the corresponding precision matrices are precisely the matrices in this diagram.
The same diagram also represents the compositional variation array from \cref{eq:intro_variation_array},
the centred log-ratio covariance matrix,
and the additive log-ratio covariance matrices of compositional data.
Building on this algebraic connection,
in \cref{sec:connections} we investigate several transfers of methods between the two fields.
The most direct transfer is a class of intrinsic logistic-normal graphical models for compositional data.
These models are obtained by imposing sparsity on the singular precision
matrix of centred log-ratios,
in analogy with \HR{} graphical models \citep{engelke2020,hentschel2023} in extreme value theory.
In other direction, we discuss how popular dimensionality reduction methods such as log-ratio analysis and weighted log-ratio analysis
\citep{aitchison_pca_1983,greenacreLewi2009distributional}
translate from compositional data analysis to extreme value theory.
\cref{sec:outlook} gives an outlook on further connections
and generalizations beyond the kernel spanned by the one-vector.
Technical details on oblique projections and \MP{} inverses
are provided in \cref{sec:obliqueProjectionsDetails,sec:MPInverse},
and mathematical proofs in \cref{sec:proofs}.

Certain parallels between extreme value theory
and compositional data analysis have been noted in the literature before
\citep{lhautRootzenSegers2025,colesTawn1991,hansonCarvalhoChen2017,kakampakouWadsworth2025},
but our work is the first to formalize this connection
by establishing
a shared algebraic framework for the covariance structures in both fields
that allows a systematic transfer of statistical methods.

%% file: tex/02_Background.tex
\subsection{Extreme value theory}

Modelling the multivariate extremes of a random vector can be achieved by several equivalent or closely related ways,
depending on whether one considers componentwise maxima,
threshold exceedances,
or
point process limits.
In this paper,
we focus on the threshold exceedance perspective,
where the limiting object is a multivariate generalized Pareto distribution.
Within this class of distributions,
the \HR{} distribution is a particularly tractable model,
since it is closely related to the multivariate normal distribution,
and allows for useful parametric representations of the dependence structure.

\subsubsection{\MultGP{} distributions}
\label{sec:backgroundMultGP}
We consider the setting of generalized Pareto distributions with exponential margins.
More precisely,
we assume that the one-dimensional margins
of the random vector $X$ follow standard exponential distributions,
and are interested in the distribution of the vector $Y$,
arising as the limit
\begin{equation}
    \label{eq:mpd:exponential}
    \P\lr{Y \leq z}
    :=
    \limit{u} \P\lr{
        X - u\onevec \leq z
        \mid
        X \not\leq u \onevec
    }
    =
    \frac{
        \expmeasure(\Rd \setminus \halfopen{-\inftyvec, z})
    }{
        \expmeasure(\Rd \setminus \halfopen{-\inftyvec, 0})
    }
    ,
\end{equation}
supported on the set $\setm{y \in \Rd}{y \not\leq \zerovec}$
and
characterized by the so-called exponent measure~$\expmeasure$ \citep{rootzen2006}.
The expression inside the limit is illustrated for a finite
threshold $u$ in the left two panels of \cref{fig:evtSamples}.
We furthermore assume that there is no mass at $-\infty$ in any component of $Y$,
implying that all components of $X$ are asymptotically dependent.
The exponent measure $\expmeasure$ and,
if it is absolutely continuous, the corresponding density $\expdens$,
satisfy the homogeneity properties
\begin{equation}
    \label{eq:homogeneityEVT}
    \expmeasure(A + \alpha \onevec)
    =
    \exp(-\alpha)
    \expmeasure(A)
    , \qquad
    \expdens(y + \alpha \onevec)
    =
    \exp(-\alpha)
    \expdens(y)
    ,
\end{equation}
for $\alpha \in \R$ and measurable sets $A$.
Due to these homogeneity properties,
the distribution of $Y$ can be decomposed into \citep{rootzen2018}
$Y = S + E \onevec$,
where $S$ is the angular part,
supported on $\setm{x \in \Rd}{\max(x) = 0}$,
and $E$ is the radial part,
following a standard exponential distribution independent of $S$.
In some contexts,
it is more insightful to work with the conditioned random vectors
$Y^v = Y \mid v\T Y > 0$,
for some vector $v \geq \zerovec$.
These random vectors can be decomposed similarly into
\begin{equation}
    \label{eq:vExtremalFunction}
    Y^v
    =
    W^v + E \onevec
    ,
\end{equation}
with $W^v$ now supported on the hyperplane $\ocompl{v} = \setm{x}{v\T x = 0}$,
and $E$ again standard exponential and independent of $W^v$ \citep{hentschel2025}.
In particular,
the canonical unit vectors $v = \ek$ yield useful representations,
since $W^{\ek}$ is supported on the hyperplane $\setm{x}{x_k = 0}$,
and can thus be identified with a random vector $W\slr{k}$ supported on $\R^{d-1}$,
obtained by removing the $k$-th component of $W^{\ek}$.

\begin{figure}
    \centering
    \resizebox{0.32\textwidth}{!}{%
        \inputtikzR{multivariateEVT/evtLogThresholdEx2}
    }
    \resizebox{0.32\textwidth}{!}{%
        \inputtikzR{multivariateEVT/evtLogThresholdEx3}
    }
    \resizebox{0.32\textwidth}{!}{%
        \inputtikzR{multivariateEVT/evtThresholdEx3}
    }
    \caption{%
        Illustration of the limit in \cref{eq:mpd:exponential}
        with a finite dataset and threshold $u$.
        The original distribution $X$ is thresholded (left),
        and shifted by $\onevec u$ (centre),
        resulting in a support of $\setm{y}{\max(z) \geq 0}$,
        for $u \to \infty$.
        On standard Pareto margins (cf. \cref{remark:ParetoMargins}),
        the distribution is rescaled instead of shifted (right),
        and the limiting support is
        $\setm{z \in \halfopen{0, \infty}^d}{\max(z) \geq 1}$.
    }
    \label{fig:evtSamples}
\end{figure}

\begin{figure}
    \center
    \resizebox{0.30\textwidth}{!}{%
        \inputtikzR{densityProjections/e1.tex}
    }
    \hfill
    \resizebox{0.30\textwidth}{!}{%
        \inputtikzR{densityProjections/OneVec.tex}
    }
    \hfill
    \resizebox{0.30\textwidth}{!}{%
        \inputtikzR{densityProjections/V_onePerp.tex}
    }
    \caption{%
        Illustration of the decomposition
        of the conditioned vector $Y^v$
        into extremal function $W^v$
        and radial part $E$ for choices
        $v = \ek[1]$ (left),
        $v = \onevec$ (centre),
        and a general vector $v \in \Rd[2]$ (right).
        The support of $Y^v$ is the non-grayed out half-space.
    }
    \label{fig:extremalFunctions}
\end{figure}

There are various ways to describe and estimate the dependence structure of a \multGP{} distribution
\citep[e.g.,][]{colesHeffernanTawn1999,Resnick2004}.
One such way is through the extremal variogram \citep{engelkeVolgushev2022},
which is defined as the element-wise average over
the matrices $\Gamma^{\ek} \in \Rd$
with entries
\begin{equation}
    \label{eq:defVariogram}
    \Gamma_{ij}^{\ek}
    =
    \Var \lr{Y_i^{\ek} - Y_j^{\ek}}
    ,
\end{equation}
for $k = 1, \dots, d$.
In \cite{hentschel2025},
this definition is generalized to
general vectors $v \geq \zerovec$ and averaged
over arbitrary sets of such vectors.

\begin{remark}
    \label{remark:ParetoMargins}
    Often, the \multGP{} distribution is defined with standard Pareto margins
    instead of exponential margins.
    The transformation between the two is given by $Y = \log Z$,
    where $Z$ is the random vector with Pareto margins,
    and the additive homogeneity properties mentioned above
    are multiplicative in this case.
    \cref{fig:evtSamples}, right panel,
    illustrates the corresponding thresholding operation and support.
    We will use exponential margins in the following,
    but sometimes mention Pareto margins, as they allow for easier
    interpretation of asymptotically independent components,
    which correspond to zero values instead of $-\infty$ in the exponential case.
\end{remark}

\subsubsection{The \HR{} distribution}
\label{sec:backgroundHR}
A class of extreme value distributions
with particularly nice covariance structures
is the \HR{} distribution.
The max-stable version of this
distribution was introduced in \cite{hueslerReiss1989}
as a limit of $\max_{i=1, \dots, n} X_i$,
where $X_i \simiid \normal(0, \Sigma_n)$,
and $(1 - (\Sigma_n)_{ij}) \log n \longrightarrow \tfrac{1}{4} \Gamma_{ij}$.
The corresponding \multGP{} distribution is fully parametrized by the matrix $\Gamma$,
and its exponent measure is absolutely continuous
with density
\begin{equation*}
    \expdens(y; \Gamma)
    =
    \exp(-y_k)
    \cdot
    \varphi_{d-1}(\tilde{y}_{\without{k}}; \Sigma\slr{k})
    .
\end{equation*}
Here, $\varphi_{d-1}$ denotes the $(d-1)$-dimensional centred normal density
with covariance $\Sigma\slr{k}$ and
\begin{alignat}{6}
    \Sigma\slr{k}_{ij}
    &=
    \half(
        \Gamma_{ik} + \Gamma_{jk} - \Gamma_{ij}
    )
    , \quad &
    i,j &\neq k
    \label{eq:SigmaKFromGamma}
    , \\
    \tilde{y}_{i}
    &=
    y_i - y_k + \half \Gamma_{ik}
    , \quad &
    i &\neq k
    \nonumber
    .
\end{alignat}
Moreover, the distribution of the extremal functions $W\slr{k}$
is multivariate normal with covariance $\Sigma\slr{k}$.
Alternatively, this exponent measure can be defined through
a point process construction involving
copies of a Gaussian distribution with covariance $\Sigma$,
whose variogram is $\Gamma$,
i.e., $\Gamma_{ij} = \Sigma_{ii} + \Sigma_{jj} - 2 \Sigma_{ij}$
\citep{engelke2015}.

\citet{engelke2020} introduce graphical models for extremes,
and show that the \HR{} distribution allows the graphical structure of
a model to be read off from the precision matrices
$\Theta\slr{k} := \lr{\Sigma\slr{k}}\inv$,
using the relation
\begin{equation*}
    \excondind{Y_i}{Y_j}{Y_{\without{i,j}}}
    \Longleftrightarrow
    \Theta\slr{k}_{ij} = 0
    , \quad
    i, j \neq k
    .
\end{equation*}
For $i = k$ or $j = k$,
this zero-condition needs to be satisfied by the corresponding row sum of $\Theta\slr{k}$,
or verified with a different pivot $k$.

In \cite{hentschel2023},
a different parametrization of the \HR{} distribution is suggested
through the non-invertible precision matrix $\Theta\sonevec \in \Rdd$,
defined by
\begin{equation}
    \label{eq:ThetaFromThetaK}
    \Theta\sonevec_{ij}
    =
    \Theta\slr{k}_{ij}
    , \quad
    \text{~for~some~}
    k \neq i,j
    .
\end{equation}
Notably, the value of $\Theta\slr{k}_{ij}$ does not depend on the choice of $k$.
Furthermore,
they show that this matrix is also the \MP{} inverse of the non-invertible covariance matrix
\begin{equation}
    \label{eq:SigmaFromGamma}
    \Sigma\sonevec
    :=
    \lr{\Id - \onevec\onevec\T/d}
    \lr{-\half \Gamma}
    \lr{\Id - \onevec\onevec\T/d}
    .
\end{equation}
For general vectors $v \geq \zerovec$,
the $v$-extremal functions $W^v$ from \cref{eq:vExtremalFunction},
are supported on the hyperplane $\ocompl{v}$
and have covariance matrix
\begin{equation}
    \label{eq:SigmaVFromGamma}
    \Sigma^v
    =
    (\Id - \onevec v\T/\onevec\T v)
    \Sigma\sonevec
    (\Id - v \onevec\T/\onevec\T v)
    .
\end{equation}

\subsection{Compositional data analysis}
\label{sec:backgroundCoDA}

Compositional data analysis provides a framework for data vectors whose
absolute scale is not meaningful,
so that information is carried only by the relative sizes of the components.
One standard approach
to modelling compositional data
is to work in Aitchison geometry,
where operations and distances are defined in terms of log-ratios.
In the following, we give a brief introduction to this geometry,
see also \citet{aitchison_statistical_1986}
for a standard reference
and \citet{greenacre_compositional_2021} for a recent review.

\subsubsection{Aitchison geometry}
\label{sec:backgroundCoDAAitchison}

We consider samples of non-negative vectors $w \in [0, \infty)^d$
where only the relative proportions of the components matter,
not the absolute values themselves.
Since the transformations introduced below require the use of
logarithms and division,
we further require all components to be strictly positive,
that is $w \in (0, \infty)^d$;
a brief discussion of how to handle zeros is given in \cref{sec:zerosSmallValues}.
In the context of compositional data analysis,
the vector $w$ is called a $d$-part composition,
and the components $w_i$ are called the parts of the composition.
As these proportions are invariant under scaling,
the data is usually rescaled to the interior of the $(d-1)$-simplex,
denoted
$\simplex = \setm{x \in \Rd}{x_i > 0, \sum_i x_i = 1}$
via the closure operator $\cl$ defined as
\begin{equation}
    \cl: \lr{0, \infty}^d
    \rightarrow
    \simplex
    , \quad
    (w_1, \dots, w_d)
    \mapsto
    \lr{
        \frac{w_1}{\sum_i w_i},
        \dots,
        \frac{w_d}{\sum_i w_i}
    }
    .
\end{equation}
A two-dimensional example of this closure operation
is shown in the left two panels of \cref{fig:codaCLR_2}.
However,
this unit-sum constraint leads to several problems,
such as spurious correlations and dependence on the
overall choice of components included in the analysis
(i.e., which quantities are considered as parts of the overall sum $\sum_i w_i$).
These problems are resolved in \cite{aitchison_statistical_1986}
by considering the following log-ratio transformations.
For a compositional data vector $x \in \simplex$,
the pairwise log-ratio between $x_i$ and $x_j$ is defined as
\begin{equation*}
  \LR(i, j)
  =
  \log \lr{\frac{x_i}{x_j}}
  , \quad
  i,j \in \set{1, \dots, d}
  ,
\end{equation*}
and computing all distinct pairwise log-ratios
with respect to a common denominator
yields the additive log-ratio
\begin{equation*}
  \ALR{j}(x)
  =
  (\LR(i,j))_{i \neq j}
  \in
  \Rd[d-1]
  .
\end{equation*}
Replacing $x_j$ by the geometric mean of $x$
yields the centred log-ratio
$\CLR(x) \in \Rd$ with entries
\begin{equation*}
  \CLR(x)_i
  =
  \log\lr{\frac{
    x_i
  }{
    \lr{\prod_{j=1}^d x_j}^{1/d}
  }}
  , \quad
  i \in \set{1, \dots, d}
  .
\end{equation*}
The computation of these transformations is illustrated in \cref{fig:codaCLR_2}.
Choosing more general linear combinations for the numerator and denominator,
we obtain log-contrasts,
defined as $\sum_i a_i \log x_i$
for some coefficients $a_i$ with $\sum_i a_i = 0$.
These transformations have many desirable properties,
such as being invariant to scaling
and allowing the definition of a meaningful Hilbert space structure
on the simplex~$\simplex$.

\begin{figure}
    \centering
    \resizebox{0.24\textwidth}{!}{%
        \inputtikzR{coda/x_raw}
    }
    \resizebox{0.24\textwidth}{!}{%
        \inputtikzR{coda/x_closed}
    }
    \resizebox{0.24\textwidth}{!}{%
        \inputtikzR{coda/CLR_1}
    }
    \resizebox{0.24\textwidth}{!}{%
        \inputtikzR{coda/ALR1_1}
    }
    \caption{%
        Original sample of positive values $w \in (0, \infty)^2$ (left),
        compositionally closed vector $x = \cl(w) \in \Delta$ (centre left),
        followed by computation of the centred log-ratios
        $\CLR(x)$ (centre right),
        and the additive log-ratios $\ALR{1}(x)$,
        embedded in $\Rd$ by inserting a zero in the first component
        (right).
    }
    \label{fig:codaCLR_2}
\end{figure}

\subsubsection{Covariance structures}
\label{sec:backgroundCoDACovStructures}

In \citet[][Chapters~4-5]{aitchison_statistical_1986},
various parametrizations of covariance structures
for compositional data are discussed.
Below, we use notation consistent with other parts of this work;
a comparison to the notation used by \citet{aitchison_statistical_1986}
is provided in \cref{sec:notationComparison}.
For these definitions,
we assume that $x$ is a random vector
supported on the simplex $\simplex$.
The compositional variation array
is defined as
the matrix containing the variance of all pairwise log-ratios,
that is,
the matrix $\Gamma \in \Rdd$ with entries
\begin{equation}
    \label{eq:compositionalVariation}
    \Gamma_{ij}
    =
    \Var \lr{\log \frac{x_i}{x_j}}
    , \quad
    i,j \in \set{1, \dots, d}
    .
\end{equation}
A common scalar summary of the variance of a composition is the
total variability, defined as
\begin{equation}
    \label{eq:totalVariability}
    \textrm{TotVar}
    =
    \sum_{i < j} \Gamma_{ij}
    =
    \frac{1}{2} \sum_{i,j} \Gamma_{ij}
    .
\end{equation}
Sometimes, a multiplicative factor of $1/d$ is included in this definition
\citep{greenacre_variable_selection_2018}.
The log-ratio covariance matrix with pivot $k$
is defined as the covariance matrix
of the additive log-ratio transformed data $\ALR{k}(x)$,
that is,
the matrix $\Sigma\slr{k} \in \R^{(d-1)\times(d-1)}$
with entries
\begin{equation}
    \label{eq:LogratioCovariance}
    \Sigma\slr{k}_{ij}
    =
    \Cov\lr{
        \log \frac{x_i}{x_k},
        \log \frac{x_j}{x_k}
    }
    , \quad
    i,j \in \set{1, \dots, d}\without{k}
    .
\end{equation}
Analogously,
the centred log-ratio covariance matrix $\Sigma\sonevec \in \Rdd$ is defined
as the covariance matrix of the centred log-ratio transformed data $\CLR(x)$,
with entries
\begin{equation}
    \label{eq:CenteredLogratioCovariance}
    \Sigma\sonevec_{ij}
    =
    \Cov\lr{
        \CLR(i),
        \CLR(j)
    }
    , \quad
    i,j \in \set{1, \dots, d}
    .
\end{equation}
To describe the dependence structure of the original data before closure,
the covariance structure of a $d$-part basis $w$
(interpreted as a random vector on $(0, \infty)^d$)
is defined as the covariance matrix of the logarithm of $w$,
that is,
$\Sigma \in \Rdd$ with entries
\begin{equation*}
    \Sigma_{ij}
    =
    \Cov\lr{
        \log w_i,
        \log w_j
    }
    .
\end{equation*}

%% file: tex/03_ObliqueProjections.tex
As illustrated above,
in both extreme value theory and compositional data analysis,
there are many different ways to represent the dependence structure
of the underlying distribution or dataset.
All of these representations are rank-deficient in some sense,
either by depending only on differences and ratios,
such as $\Gamma$ in \cref{eq:defVariogram,eq:compositionalVariation},
having one fewer dimension than the data itself,
such as $\Sigma\slr{k}$ in \cref{eq:SigmaKFromGamma,eq:LogratioCovariance},
or by having non-trivial kernels,
such as $\Sigma\sonevec$ in \cref{eq:SigmaFromGamma,eq:CenteredLogratioCovariance}.
Furthermore,
it is not immediately clear how these different representations relate to each other,
as, for example, the transformation between $\Theta\slr{k}$ and $\Theta\sonevec$
in \cref{eq:ThetaFromThetaK} is not an elementary mathematical operation at first glance,
and computations of log-ratios for compositional data
involve multiple steps of closure operations, log transformations, and ratio calculations.

In this section,
we define a general framework
that allows us to formulate these different representations of the dependence structure
using a small set of mathematical operations,
most importantly oblique projections and pseudoinverses.
In \cref{sec:obliqueProjections},
we define and relate these operations in a very general setting,
allowing for arbitrary kernels and images of the projections.
Then, in \cref{sec:covarianceStructures}, we show how the covariance structures
from extreme value theory and compositional data analysis
fit into this framework, using a common kernel of $\onevec\R$.

\subsection{Oblique projections of vectors}
\label{sec:projectionsOfVectors}
We consider the space $\Rd$ of $d$-dimensional real vectors.
Unless specified otherwise,
vectors, denoted by lowercase letters, are from $\Rd$,
matrices, denoted by uppercase letters, are from $\Rdd$,
and subspaces, also denoted by uppercase letters, usually $U$, $V$, $W$,
are subspaces of $\Rd$.
Two subspaces $U, V$ are complementary if they intersect only in the zero vector
and their direct sum is the whole space $\Rd$.
\begin{definition}
    \label{def:obliqueProjection}
    Let $U$, $V$ be complementary subspaces.
    Then $\PUV$ denotes the oblique projector along $U$ onto $V$,
    that is, the unique matrix satisfying
    \begin{equation*}
        \PUV u
        =
        \zerovec
        \quad \forall
        u \in U
        , \qquad
        \PUV v
        =
        v
        \quad \forall
        v \in V
        .
    \end{equation*}
\end{definition}
The kernel $U$ is also called the along-space,
and the image $V$ the onto-space of $\PUV$.
See \cref{fig:illustration_lemma_pvwpvw},
left and right panel,
for an illustration of the effect of an oblique projection on a vector.
It is well known that oblique projections
are idempotent and that
any idempotent matrix is an oblique projection
along its kernel onto its image
\citep[e.g.,][]{Afriat1957}.

In the following,
we identify projections
$\PXYo{U}{V}$
by their kernel, say $U$,
and the orthogonal complement $V$ of their image $V\orth$.
With this notation, $U$ and $V$ have the same dimension,
and expressions that relate different projections become more symmetric,
for instance the transpose of a projection
$\lr{\PXYo{U}{V}}\T = \PXYo{V}{U}$,
see also \cref{lemma:projectorMPInv}.
In particular, when the kernel is one-dimensional,
and the image is a hyperplane,
this convention is very natural,
and we simplify notation
by denoting a subspace by the vector that spans it,
e.g., $\Pxyo{u}{v} := \PXY{(u\R)}{(\set{v}\orth)}$.

Multiplying projections with different kernels and images
does not yield a projection in general \citep[e.g.,][]{takaneYanai1999}.
However, if the projections share a common kernel or image,
the product is again a projection.
This is a rather basic property of projections,
but it is fundamental for the relationships
between different covariance representations in later sections,
which is why we state it as a \namecref{lemma:pvwpvw} below.
\begin{lemma}
    \provenlabel{lemma:pvwpvw}
    Let $\ocompl{V}, \ocompl{W}$ each be complementary to $U$.
    Then
    \begin{align*}
        \PUVo
        \PUWo
        &=
        \PUVo
        , \\
        \lr{\PUVo\restrictto{W\orth}}\inv
        &=
        \PUWo\restrictto{V\orth}
        .
    \end{align*}
\end{lemma}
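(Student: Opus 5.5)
The plan is to check both identities directly against \cref{def:obliqueProjection}. This uses the notation convention that $\PUVo$ is the projector along $U$ onto $\ocompl{V}$, and $\PUWo$ the projector along $U$ onto $\ocompl{W}$. The standing hypothesis that $\ocompl{V}$ and $\ocompl{W}$ are each complementary to $U$ guarantees that both projectors exist and are unique. We also use that any $x \in \Rd$ decomposes as $x = u + w$ with $u \in U$ and $w \in \ocompl{W}$.

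For the first identity, I will compare the two sides on the decomposition $x = u + w$. On the left, $\PUWo x = w$, so $\PUVo \PUWo x = \PUVo w$. On the right, $\PUVo x = \PUVo u + \PUVo w = \PUVo w$, because $\PUVo$ annihilates $U$. The two sides agree for every $x$, so the matrices are equal. A shorter way to see the same thing is that $\Id - \PUWo$ maps into $U = \kernel(\PUVo)$, hence $\PUVo(\Id - \PUWo) = 0$.

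For the second identity, I will interpret $\PUVo\restrictto{W\orth}$ as a linear map from $\ocompl{W}$ to $\ocompl{V}$, and $\PUWo\restrictto{V\orth}$ as a map from $\ocompl{V}$ to $\ocompl{W}$. It suffices to show that both compositions are identities.
\begin{itemize}
\item For $y \in \ocompl{V}$, apply the first identity with the roles of $V$ and $W$ exchanged, namely $\PUVo \PUWo = \PUVo$. This gives $\PUVo \PUWo y = \PUVo y = y$, since $y$ lies in the image of $\PUVo$.
\item Symmetrically, the first identity as stated gives $\PUWo \PUVo w = \PUWo w = w$ for $w \in \ocompl{W}$.
\end{itemize}
Hence the restricted maps are mutually inverse bijections between $\ocompl{W}$ and $\ocompl{V}$.

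There is no real obstacle in this argument. The only points needing care are bookkeeping. First, one must check that the first identity is symmetric in $V$ and $W$, which holds because the hypotheses are symmetric. Second, the restriction in the second identity has to be read as a map between the two onto-spaces rather than as a square matrix on $\Rd$. Once that interpretation is fixed, everything follows from the defining property of oblique projectors.
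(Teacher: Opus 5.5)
Your proof is correct and follows the paper's argument. You get the first identity by splitting $x$ into a part in $U$, which $\PUVo$ kills, and a part in $\ocompl{W}$, which $\PUWo$ fixes; you get the inverse relation by combining the first identity and its $V \leftrightarrow W$ swap with the fact that each projector is the identity on its image.

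The only flaw is that two labels are swapped. Your first bullet actually uses the identity as stated, $\PUVo\PUWo = \PUVo$, and your second uses the exchanged one, $\PUWo\PUVo = \PUWo$. This is harmless, since both hold by the symmetry of the hypotheses.
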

In other words, a sequence of projections with common kernel
reduces to the last (i.e, left-most) projection,
and, shown for example by considering transposes,
a sequence of projections with common image
reduces to the first (i.e., right-most) projection.
Since $\PUWo$ is the identity on $W\orth$,
this implies the bijectivity relationship in the last equation.
See \cref{fig:illustration_lemma_pvwpvw} for a two-dimensional illustration
of these relationships.

\begin{figure}
    \centering
    \resizebox{0.32\textwidth}{!}{%
        \inputtikzR{uvwProjections/proj_uv}
    }
    \resizebox{0.32\textwidth}{!}{%
        \inputtikzR{uvwProjections/proj_uvwinv}
    }
    \resizebox{0.32\textwidth}{!}{%
        \inputtikzR{uvwProjections/proj_uw}
    }
    \caption{%
        Illustration of \cref{lemma:pvwpvw}.
        The left and right panels show the effect of $\Puvo$ and $\Puwo$
        on a general vector $x \in \Rd[2]$.
        The middle panel illustrates that subsequent projections with common kernel
        reduce to the last projection,
        yielding a bijective relationship between~$\ocompl{v}$ and~$\ocompl{w}$.
    }
    \label{fig:illustration_lemma_pvwpvw}
\end{figure}
\subsection{Oblique projections of positive semidefinite matrices}

When applying a projection $\PUVo$ to a centred random vector $Y$,
the covariance of the projected vector is related to
the original covariance matrix operation by
\begin{equation}
    \label{eq:covarianceOfProjection}
    \E\lr{\PUVo Y \lr{\PUVo Y}\T}
    =
    \E\lr{\PUVo Y Y\T \PVUo}
    ,
\end{equation}
motivating the following definition.
\begin{definition}
    \label{def:pivw}
    For matrix $A \in \Rdd$
    and complementary subspaces $U, \ocompl{V}$,
    define the ``covariance projection''
    $\pUVo$ as
    \begin{equation*}
        \pUVo(A)
        =
        \PUVo A \PVUo
        \in
        \Rdd
        .
    \end{equation*}
\end{definition}
Note that this operation is defined for general matrices $A$,
but we are mostly interested in the symmetric, positive semidefinite case,
occurring as covariance matrices of random vectors.
To denote the sets that these matrices belong to,
let $\setpsdany$ be the set of all symmetric positive semidefinite matrices,
and define
\begin{align}
    \setpsd{V}
    &=
    \setm{
        A \in \setpsdany
    }{
        \kernel(A) = V
    }
    \label{eq:setpsdV}
    , \\
    \setpsdt{V}
    &=
    \setm{
        A \in \setpsdany
    }{
        \image(A) + V = \Rd
    }
    \label{eq:setpsdtV}
    .
\end{align}
By construction,
the kernel of $\pUVo(A)$ necessarily contains $V$,
the set $\setpsd{V}$ therefore consists of all ``maximal rank'' covariance
matrices after this operation.
Furthermore,
the following lemma shows that
the set $\setpsdt{U}$ consists of positive semidefinite
matrices that map to such a matrix under $\pUVo$.
In particular,
the set $\setpsdt{U}$ contains all strictly positive definite matrices
and is a superset of
$\setpsd{V}$ for any $\ocompl{V}$ complementary to $U$.
When working with Gaussian distributions,
we are also interested in precision matrices, defined as the (pseudo)inverse of the covariance matrix.
To understand the relationship between different covariance projections and pseudoinverses,
we give the following result.
\begin{lemma}
    \provenlabel{lemma:propertiesSymmetricProjection}
    Let $\ocompl{V}, \ocompl{W}$ be complementary to $U$.
    Then for $A \in \setpsdt{U}$,
    we have $\pUVo(A) \in \setpsd{V}$
    and
    \begin{alignat}{6}
        \pUVo
        \pUWo
        &=
        \pUVo
        \label{eq:symmetricPvwpvw}
        , \\
        \lr{\pUVo\restrictto{\setpsd{W}}}\inv
        &=
        \pUWo\restrictto{\setpsd{V}}
        \label{eq:symmetricProjectionRestrictedInverse}
        , \\
        \pUVo(A)\pinv
        &=
        \pVVo(A\pinv)
        \in
        \setpsd{V}
        ,
        \qquad&
        \text{for }
        A \in \setpsd{U}
        \label{eq:projectionMPInvPSD1}
        , \\
        \pUUo(A)\pinv
        &=
        \pVUo(A\pinv)
        \in
        \setpsd{U}
        ,
        \qquad&
        \text{for }
        A \in \setpsd{V}
        .
        \label{eq:projectionMPInvPSD2}
    \end{alignat}
\end{lemma}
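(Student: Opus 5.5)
The plan is to reduce every claim to \cref{lemma:pvwpvw} and its transposed version. Throughout, write $P := \PUVo$, so that $P\T = \PVUo$ is the projector along $V$ onto $\ocompl{U}$. Write $R := \PUUo$ and $Q := \PVVo$ for the orthogonal projectors onto $\ocompl{U}$ and $\ocompl{V}$. Transposing \cref{lemma:pvwpvw} gives the common-image identity $\PWUo\PVUo = \PVUo$.

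\emph{Kernel claim.} Clearly $V \subseteq \kernel(P A P\T)$, since $P\T$ annihilates $V$. Conversely, suppose $PAP\T x = \zerovec$. Then $x\T P A P\T x = 0$, and positive semidefiniteness of $A$ gives $A P\T x = \zerovec$. Hence $P\T x \in \kernel(A) = \image(A)\orth$. Also $P\T x \in \ocompl{U}$. The assumption $\image(A) + U = \Rd$ is equivalent to $\image(A)\orth \cap \ocompl{U} = \set{\zerovec}$, so $P\T x = \zerovec$, i.e.\ $x \in V$. Symmetry of $\pUVo(A)$ is immediate, so $\pUVo(A) \in \setpsd{V}$.

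\emph{Identities \cref{eq:symmetricPvwpvw} and \cref{eq:symmetricProjectionRestrictedInverse}.} For \cref{eq:symmetricPvwpvw}, we have
\begin{equation*}
    \pUVo\pUWo(A) = \PUVo\PUWo A \PWUo\PVUo = \PUVo A \PVUo ,
\end{equation*}
using \cref{lemma:pvwpvw} on the left and its transpose on the right. For \cref{eq:symmetricProjectionRestrictedInverse}, first note that every $B \in \setpsd{W}$ has $\image(B) = \ocompl{W}$, which is complementary to $U$. Hence $B \in \setpsdt{U}$, and by the kernel claim $\pUVo$ maps $\setpsd{W}$ into $\setpsd{V}$. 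Moreover, $\PUWo$ is the identity on $\ocompl{W} = \image(B)$, so $\PUWo B = B$ and, by symmetry, $B\PWUo = B$. Thus $\pUWo(B) = B$ on $\setpsd{W}$. Combined with \cref{eq:symmetricPvwpvw}, this gives $\pUWo\pUVo = \mathrm{id}$ on $\setpsd{W}$. The same argument with $V$ and $W$ exchanged gives $\pUVo\pUWo = \mathrm{id}$ on $\setpsd{V}$.

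\emph{Moore--Penrose identity \cref{eq:projectionMPInvPSD1}.} Let $A \in \setpsd{U}$, so $A = RAR$ and $A\pinv = R A\pinv R$, both with image $\ocompl{U}$. By \cref{lemma:pvwpvw} with $W = U$, we have $PR = P$ and, transposing, $RP\T = P\T$. Moreover, $T := P\restrictto{\ocompl{U}} \colon \ocompl{U} \to \ocompl{V}$ is a bijection with inverse $R\restrictto{\ocompl{V}}$. On $\ocompl{V}$, the matrix $M := PAP\T$ therefore acts as $T A\restrictto{\ocompl{U}} T^*$. Here $T^* \colon \ocompl{V} \to \ocompl{U}$ is the adjoint of $T$, equal to $P\T\restrictto{\ocompl{V}}$, and $A\restrictto{\ocompl{U}}$ is invertible on $\ocompl{U}$. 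Its inverse on $\ocompl{V}$ is $(T^{-1})^* (A\restrictto{\ocompl{U}})^{-1} T^{-1}$. The adjoint of $R\restrictto{\ocompl{V}} \colon \ocompl{V} \to \ocompl{U}$ is $Q\restrictto{\ocompl{U}}$. Extending by zero on $V$ (the kernel of $M$) and using $A\pinv R = A\pinv$, the candidate is $N := Q A\pinv R Q = Q A\pinv Q = \pVVo(A\pinv)$. To be safe, I would verify the Penrose conditions directly: $MN = NM = Q$, the orthogonal projector onto $\image(M) = \ocompl{V}$, which is symmetric, and then $MNM = M$ and $NMN = N$ follow. By the kernel claim (with $U$ replaced by $V$, since $\image(A\pinv) + V = \Rd$), $N \in \setpsd{V}$.

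\emph{Identity \cref{eq:projectionMPInvPSD2}.} This is the same statement with the roles of $U$ and $V$ exchanged. Take $A \in \setpsd{V}$; then $\pUUo(A) = RAR$, and the identical argument with $T' = \PVUo\restrictto{\ocompl{V}} \colon \ocompl{V} \to \ocompl{U}$ produces $\PVUo A\pinv \PUVo = \pVUo(A\pinv)$.

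The main obstacle is \cref{eq:projectionMPInvPSD1}. Everything else is bookkeeping with \cref{lemma:pvwpvw}, whereas here one must correctly identify the adjoint of the restricted bijection $T$ with the orthogonal projector $Q$, and confirm the Penrose symmetry conditions rather than just invertibility on $\ocompl{V}$.
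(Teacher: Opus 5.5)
Your proof is correct. For \cref{eq:symmetricPvwpvw,eq:symmetricProjectionRestrictedInverse} it is essentially the paper's argument. For the kernel claim and the pseudoinverse identities you take a different route.

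The paper never works with $A$ directly. It writes $A = XX\T$ with $X \in \setMvt{U}$ (resp.\ $X \in \setMv{U}$) and proves the corresponding statements for the rectangular factor in \cref{lemma:projectionMatrices}, in particular $(\PUVo X)\pinv = X\pinv\PVVo$ by checking the Penrose conditions. It then transfers these to $A$ through $\pUVo(XX\T) = (\PUVo X)(\PUVo X)\T$ and $(XX\T)\pinv = X\pinvT X\pinv$.

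You work on $A$ itself. You prove the kernel claim by a quadratic-form argument. You prove \cref{eq:projectionMPInvPSD1} by viewing $\pUVo(A)$ on $\ocompl{V}$ as the congruence $T A\restrictto{\ocompl{U}} T^*$ with $T = \PUVo\restrictto{\ocompl{U}}$, inverting, and extending by zero on $V$.

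The paper's route gives the factor-level identities as a by-product, which are relevant for data matrices and summarized in its diagram for $X$. The symmetric case then becomes a formal corollary. Your route is self-contained and avoids the Gram factorization. It also explains why the orthogonal projector $\PVVo$ appears: it is the adjoint of $T\inv = \PUUo\restrictto{\ocompl{V}}$. The Penrose check you only announce does go through with \cref{lemma:pvwpvw}. For example, $MN = PAP\T Q A\pinv Q = PAA\pinv Q = PRQ = Q$, using $P\T Q = P\T$ and the fact that $P\T$ fixes $\image(A\pinv) = \ocompl{U}$. The identity $NM = Q$ is analogous.

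There is one small slip in your treatment of \cref{eq:projectionMPInvPSD2}. For $A \in \setpsd{V}$ one has $\pUUo(A)\restrictto{\ocompl{U}} = R\restrictto{\ocompl{V}}\, A\restrictto{\ocompl{V}}\, Q\restrictto{\ocompl{U}}$. So the bijection playing the role of $T$ is $\PUUo\restrictto{\ocompl{V}}$, not $\PVUo\restrictto{\ocompl{V}}$. The latter is the adjoint of the inverse of $\PUUo\restrictto{\ocompl{V}}$, which is why it shows up in the answer.

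Your remark that this is ``the same statement with $U$ and $V$ exchanged'' is accurate if read as follows. Your argument with $T' = \PVUo\restrictto{\ocompl{V}}$ proves exactly the swapped identity $\pVUo(B)\pinv = \pUUo(B\pinv)$ for $B \in \setpsd{V}$. Applying it to $B = A\pinv$ and taking pseudoinverses of both sides yields \cref{eq:projectionMPInvPSD2}. Either fix is routine.
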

The first two relationships
in \cref{lemma:propertiesSymmetricProjection}
are a direct consequence of \cref{lemma:pvwpvw},
whereas the relationships involving the pseudoinverse
are less immediate and require verification of the
Moore-Penrose conditions from \cref{def:MPInverse}.
These relationships are represented by the commutative
diagram in \cref{fig:commutesUVpsd},
and for the choice $U=\onevec\R$ in
the bottom two rows of the diagram in \cref{fig:commutesVOnevecGamma}.

%% file: tex/04_CovarianceStructures.tex
\subsection{The variogram}
\label{sec:variogram}

The maps in the previous section also define equivalence classes of covariance structures,
in the sense that a matrix $A \in \setpsd{U}$
can be represented by the matrix $\pUVo(A) \in \setpsd{V}$,
without any loss of information, and vice versa.
For arbitrary kernels $U$
these equivalence classes do not have a clear interpretation,
but for specific choices they can be meaningful.
A very important such example with $U = \onevec\R$
arises from the variogram corresponding to a covariance matrix,
defined by the mapping
\begin{alignat*}{6}
    \myFunctionDefinitionInline{\vario}{\Rdd}{\Rdd}{S}{
        \onevec \diag\lr{S}\T
        + \diag\lr{S} \onevec\T
        - 2 S
    }
    .
\end{alignat*}
This map can be defined for all matrices in $\Rdd$,
but is mostly relevant for the set of symmetric positive (semi-)definite matrices $\setpsdany$.
If a variogram matrix has full rank,
it is (strictly) \cnd{} and belongs to the set $\setcnd$ defined as
\begin{align*}
    \setcnd
    &=
    \setm{
        M \in \Rdd
    }{
        M = M\T,
        \diag\lr{M} = \zerovec,
        v\T M v < 0
        \,\forall\,
        \zerovec \neq v \perp \onevec
    }
    .
\end{align*}
The relationships between different covariance and precision matrices corresponding
to the same variogram are described by the following result.
As above, we simplify notation for one-dimensional subspaces
and denote them by the vector that spans them,
e.g., $\setpsd{v} := \setpsd{(v\R)}$.
\begin{proposition}
    \provenlabel{prop:variogramProperties}
    Let $v$ be any vector such that $v \not\perp \onevec$.
    The variogram map $\vario$ has the following properties.
    \begin{enumerate}
        \item \label{prop:variogramProperties1}
            For $A \in \setpsdt{\onevec}$, we have $\vario(A) \in \setcnd$,
            and for $A \in \setpsdany \setminus \setpsdt{\onevec}$, we have $\vario(A) \notin \setcnd$.
        \item \label{prop:variogramProperties2}
            The map $\vario$ is invariant under covariance projections along $\onevec\R$
            onto $\ocompl{v}$, i.e.,
            \begin{align*}
                \vario\lr{\pevo(A)}
                &=
                \vario(A)
                , \quad
                \forall A \in \Rdd
                .
            \end{align*}
        \item \label{prop:variogramProperties3}
            Restricted to domain $\setpsd{v}$ and codomain $\setcnd$,
            the map $\vario$ is bijective with inverse $-\half\pevo$.
            Moreover, $\pevo(-\half \vario(A)) = \pevo(A)$ for all $A \in \Rdd$.
        \item \label{prop:variogramProperties4}
            For $\Sigma\sonevec \in \setpsd{\onevec}$ and $\Sigma^v \in \setpsd{v}$,
            satisfying $\vario(\Sigma\sonevec) = \vario(\Sigma^v)$,
            the precision matrices $\Theta\sonevec = (\Sigma\sonevec)\pinv$ and $\Theta^v = (\Sigma^v)\pinv$ satisfy
            \begin{align*}
                \Theta\sonevec
                &=
                \pveo(\Theta^v)
                , \qquad
                \Theta^v
                =
                \pvvo(\Theta\sonevec)
                .
            \end{align*}
    \end{enumerate}
\end{proposition}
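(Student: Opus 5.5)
The plan is to prove the four parts in the order (2), (3), (1), (4), since (2) is a direct computation that feeds the others.

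\emph{Part (2).} Write $P = \Pevo = \Id - \onevec v\T/(v\T\onevec)$. Then $PAP\T = A - \onevec a\T - b\onevec\T + c\,\onevec\onevec\T$ for suitable vectors $a,b$ and a scalar $c$. The map $\vario$ is linear. The key observation is that it annihilates every matrix of the form $\onevec x\T + y\onevec\T$ with $x,y \in \Rd$. Checking this entrywise: the $(i,j)$ entry of $\vario(M)$ is $M_{ii}+M_{jj}-M_{ij}-M_{ji}$, and this vanishes for $M = \onevec x\T$, $M = y\onevec\T$, and $M = \onevec\onevec\T$. Hence $\vario(\pevo(A)) = \vario(A)$ for every $A \in \Rdd$.

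\emph{Part (3).} First I show that $\pevo(-\half\vario(A)) = \pevo(A)$ for all $A$. Indeed, $-\half\vario(A) = A - \half(\onevec\,\diag(A)\T + \diag(A)\onevec\T)$. The extra terms are killed by $P$ on the left or by $P\T$ on the right, because $P\onevec = \zerovec$.

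To get bijectivity, I argue in two directions.
\begin{itemize}
\item For $A \in \setpsd{v}$ we have $\image(A) = \ocompl{v}$, so $PA = A$ and hence $\pevo(A) = A$. The identity above then gives $-\half\pevo(\vario(A)) = A$.
\item Conversely, take $M \in \setcnd$ and put $S = -\half\pevo(M)$. Then $\vario(S) = -\half\vario(M)$ by Part (2). Because $\diag(M) = \zerovec$, we get $-\half\vario(M) = M$, so $\vario(S) = M$.
\end{itemize}
It remains to check that $S \in \setpsd{v}$. For $x \in \Rd$, set $y = P\T x$; then $y \perp \onevec$, and $x\T S x = -\half\, y\T M y \ge 0$. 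Equality holds exactly when $y = \zerovec$, i.e.\ when $x \in \kernel(P\T) = v\R$. So $S$ is \psd{} with kernel $v\R$. Together with Part (1), this shows that $\vario$ maps $\setpsd{v}$ into $\setcnd$, which completes the bijection.

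\emph{Part (1).} Write $\Pi = \Id - \onevec\onevec\T/d$. For $y \perp \onevec$ the quadratic form is $y\T\vario(A)y = -2\,y\T A y$, since the diagonal terms drop out, and $\diag(\vario(A)) = \zerovec$ always. So $\vario(A) \in \setcnd$ if and only if $y\T A y > 0$ for every nonzero $y \perp \onevec$. Since $A$ is \psd{}, this holds if and only if $\kernel(A) \cap \ocompl{\onevec} = \set{\zerovec}$. By taking orthogonal complements, this is equivalent to $\image(A) + \onevec\R = \Rd$, which is exactly $A \in \setpsdt{\onevec}$. The case $A \notin \setpsdt{\onevec}$ produces a nonzero $y \perp \onevec$ in $\kernel(A)$, so $y\T\vario(A)y = 0$ and $\vario(A)\notin\setcnd$.

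\emph{Part (4).} This is where the earlier lemma is used. By Part (3), with $v = \onevec$ and with general $v$, both $\Sigma\sonevec$ and $\Sigma^v$ equal $-\half\,\pi$ of the common variogram. More precisely, $\Sigma^v = \pevo(\Sigma^v) = \pevo(-\half\vario(\Sigma^v))$, and likewise $\Sigma\sonevec = \peeo(-\half\vario(\Sigma\sonevec))$. By Part (3) these give $\Sigma^v = \pevo(\Sigma\sonevec)$ and $\Sigma\sonevec = \peeo(\Sigma^v)$.

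I then apply \cref{lemma:propertiesSymmetricProjection}, \cref{eq:projectionMPInvPSD1}, with $U = \onevec\R$, $V = v\R$ and $A = \Sigma\sonevec \in \setpsd{\onevec}$. This yields $(\Sigma^v)\pinv = \pvvo(\Theta\sonevec)$. Next I apply \cref{eq:projectionMPInvPSD2}, with $U = \onevec\R$, $V = v\R$ and $A = \Sigma^v \in \setpsd{v}$. This gives $(\peeo(\Sigma^v))\pinv = \pveo(\Theta^v)$, i.e.\ $\Theta\sonevec = \pveo(\Theta^v)$.

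The main obstacle I expect is matching the index conventions of that lemma to $\pvvo$ and $\pveo$ here. I would check this carefully against \cref{def:pivw}, in particular which subspace is the kernel and which spans the orthogonal complement of the image, and also confirm that $v \not\perp \onevec$ gives the complementarity the lemma requires.
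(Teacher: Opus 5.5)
Your route is the same as the paper's: direct computations for Parts (1)--(3), and \cref{lemma:propertiesSymmetricProjection} with $U=\onevec\R$, $V=v\R$ for Part (4). Parts (1), (3) and (4) are correct.

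In (4) you spell out the step $\Sigma^v=\pevo(\Sigma\sonevec)$, $\Sigma\sonevec=\peeo(\Sigma^v)$, which the paper leaves implicit. Your matching of \cref{eq:projectionMPInvPSD1,eq:projectionMPInvPSD2} to $\pvvo$ and $\pveo$ is right. The complementarity the lemma needs ($\onevec\R$ with $\ocompl{v}$, and $v\R$ with $\ocompl{\onevec}$) is exactly the hypothesis $v\not\perp\onevec$.

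In (3), when you invoke Part (1) to get $\vario(\setpsd{v})\subseteq\setcnd$, you should also state the inclusion $\setpsd{v}\subseteq\setpsdt{\onevec}$. It holds because $\ocompl{v}+\onevec\R=\Rd$, which again uses $v\not\perp\onevec$.

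The genuine problem is the key observation in Part (2). The paper defines $\vario(S)=\onevec\diag(S)\T+\diag(S)\onevec\T-2S$, whose $(i,j)$ entry is $S_{ii}+S_{jj}-2S_{ij}$, not $S_{ii}+S_{jj}-S_{ij}-S_{ji}$. Hence
\begin{equation*}
    \vario(\onevec x\T)=x\onevec\T-\onevec x\T, \qquad \vario(y\onevec\T)=\onevec y\T-y\onevec\T,
\end{equation*}
and both are nonzero in general. Only $\onevec\onevec\T$ and symmetric combinations $\onevec x\T+x\onevec\T$ are annihilated. With your $a=A\T v/(v\T\onevec)$ and $b=Av/(v\T\onevec)$, the correct outcome is
\begin{equation*}
    \vario(\pevo(A))=\vario(A)+\onevec(a-b)\T-(a-b)\onevec\T ,
\end{equation*}
which equals $\vario(A)$ only when $a-b\in\onevec\R$.

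This cannot be repaired for general $A$, because the claim itself fails. Take $d=2$, $v=\kev{1}$ and $A=\kev{1}\kev{2}\T$. Then $\pevo(A)=-\kev{2}\kev{2}\T$, so
\begin{equation*}
    \vario(\pevo(A))=-(\kev{1}\kev{2}\T+\kev{2}\kev{1}\T)\neq-2\kev{1}\kev{2}\T=\vario(A).
\end{equation*}

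The fix is to state and prove Part (2) for symmetric $A$. Then $a=b$ and your argument goes through unchanged. Nothing later is affected:
\begin{itemize}
    \item Part (3) only applies (2) to $M\in\setcnd$, which is symmetric.
    \item The identity $\pevo(-\half\vario(A))=\pevo(A)$ does hold for all $A\in\Rdd$, as you show, since it only uses $\Pevo\onevec=\zerovec$.
\end{itemize}
The paper's own computation carries the same tacit symmetry assumption, where it rewrites $\onevec v\T A$ as $\onevec(Av)\T$.
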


This \namecref{prop:variogramProperties} fully describes the covariance structures
corresponding to a \cnd{} variogram.
Any covariance matrix in $\setpsdt{\onevec}$ corresponds to a \cnd{} variogram,
whereas any other covariance matrix does not map to a (strictly) \cnd{} variogram.
Since the map between $\setpsd{v}$ and $\setcnd$ is bijective,
there is exactly one covariance matrix in each set $\setpsd{v}$
corresponding to a given \cnd{} variogram.
Furthermore,
the different covariance matrices corresponding to the same variogram
can be obtained from each other by covariance projections along $\onevec\R$,
or from the variogram by applying $-\half \pevo$.
Finally, the precision matrices corresponding to the same variogram
are related by covariance projections as well,
though with different kernels than the covariance matrices.
These relationships are summarized in the diagram in \cref{fig:commutesVOnevecGamma}.
In the following, we show that these matrices and relationships
occur naturally in both extreme value theory and compositional data analysis.

\begin{figure}
    \centering
    \inputtikz{commutesVOnevecGammaWithW}
    \caption{%
        Commutative diagram showing the relationship between covariance matrices,
        precision matrices, and variograms.
        The top node is the variogram representation $\Gamma \in \setcnd$,
        the second row contains covariance matrices $\Sigma^v \in \setpsd{v}$
        for different choices of $v$,
        and the third row contains the corresponding precision matrices $\Theta^v \in \setpsd{v}\T$.
        Labelled arrows indicate a mapping from the set at its tail to the set at its head.
        In the second row, unlabelled arrows with target $\setpsd{u}$ for some $u$
        are projections $\pexo{u}$.
        Double-headed arrows indicate bijections,
        with $\cdot\pinv$ and $\cdot\inv$ being self-inverses,
        otherwise each direction is labelled on the side of the corresponding arrowhead.
        Omitted arrows indicate maps that do not simplify nicely.
        Commutativity implies that any two directed paths
        with the same start and end point define the same map.
        For example, starting from $\setpsd{v}$ in the second row,
        one can either first project to $\setpsd{\onevec}$
        and then apply the pseudoinverse,
        or first apply the pseudoinverse and then project as indicated in the precision row,
        yielding the identity
        $(\peeo(\Sigma^v))\pinv = \pveo((\Sigma^v)\pinv) \in \setpsd{\onevec}$
        for $\Sigma^v \in \setpsd{v}$.
    }
    \label{fig:commutesVOnevecGamma}
\end{figure}

\subsection{Extreme value theory}
\label{sec:covarianceStructuresEVT}

When considering \multGP{} distributions on exponential margins,
the vector $\onevec$ plays a special role,
both in the defining limit equation \cref{eq:mpd:exponential}
and the resulting homogeneity properties of the exponent measure and density,
as shown in \cref{eq:homogeneityEVT}.
Shifting the data by a multiple of $\onevec$ only changes the shared
radial component of the distribution, but not the dependence structure.
This is reflected also in the covariance structures of the \HR{} distribution,
which are all related by covariance projections along $\onevec\R$,
as the following result shows.

\begin{corollary}
    \provenlabel{cor:covarianceStructuresEVT}
    Let $Y$ be a random variable with \multGP{} \HR{} distribution,
    parametrized by $\Gamma \in \setcnd$,
    and denote $\Sigma^v = \pevo(-\half \Gamma) \in \setpsd{v}$.
    Then this $\Sigma^v$
    is the covariance of the $v$-extremal function of $Y$
    as defined in \cref{eq:SigmaVFromGamma}.
    For the matrix $\Sigma\sonevec$ from \cref{eq:SigmaFromGamma}
    we have
    \begin{equation*}
        \Sigma\sonevec
        =
        \peeo(-\half \Gamma)
        ,
    \end{equation*}
    and the matrices $\Sigma\slr{k}$ from \cref{eq:SigmaKFromGamma} can
    be obtained by dropping the $k$-th row and column
    from the matrices
    \begin{equation*}
        \Sigma^{\ek}
        =
        \pxyo{\onevec}{\ek}(-\half \Gamma)
        =
        \pxyo{\onevec}{\ek}(\Sigma\sonevec)
        .
    \end{equation*}
\end{corollary}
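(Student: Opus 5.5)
The corollary is essentially an unpacking of the explicit matrices from \cref{sec:backgroundHR} into the projection notation, so the plan is to compute the projectors $\Pevo$ concretely and compare.

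\textbf{Explicit projector.} First I will identify $\Pevo$ for a general $v \not\perp \onevec$. The matrix $\Id - \onevec v\T/\onevec\T v$ maps $\onevec$ to $\zerovec$. It fixes every $x$ with $v\T x = 0$, and its image lies in $\ocompl{v}$ because $v\T(\Id - \onevec v\T/\onevec\T v) = \zerovec\T$. By the uniqueness in \cref{def:obliqueProjection}, $\Pevo = \Id - \onevec v\T/\onevec\T v$. Taking transposes, or checking directly, gives $\Pveo = \Id - v\onevec\T/\onevec\T v$.

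\textbf{Matching the covariances.} By \cref{def:pivw},
\[
\pevo(A) = (\Id - \onevec v\T/\onevec\T v)\,A\,(\Id - v\onevec\T/\onevec\T v).
\]
\begin{itemize}
\item For $v = \onevec$ this becomes $(\Id - \onevec\onevec\T/d)\,A\,(\Id - \onevec\onevec\T/d)$. With $A = -\half\Gamma$ this is exactly \cref{eq:SigmaFromGamma}, so $\Sigma\sonevec = \peeo(-\half\Gamma)$.
\item For the $v$-extremal function, \cref{eq:SigmaVFromGamma} reads $\Pevo\,\Sigma\sonevec\,\Pveo = \pevo(\Sigma\sonevec)$. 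Substituting $\Sigma\sonevec = \peeo(-\half\Gamma)$ and applying \cref{eq:symmetricPvwpvw} with $U = \onevec\R$ gives $\pevo\peeo = \pevo$. Hence this covariance equals $\pevo(-\half\Gamma) = \Sigma^v$.
\item Alternatively, \cref{prop:variogramProperties}, part~\ref{prop:variogramProperties2}, shows that the variogram of $\pevo(\Sigma\sonevec)$ equals that of $\Sigma\sonevec$. Part~\ref{prop:variogramProperties3} then identifies both expressions as the unique preimage of $\Gamma$ in $\setpsd{v}$.
\end{itemize}
Membership $\Sigma^v \in \setpsd{v}$ also follows from part~\ref{prop:variogramProperties3}, since $-\half\pevo$ maps $\setcnd$ into $\setpsd{v}$.

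\textbf{The pivot case.} Specialising to $v = \ek$ gives the second claimed equality $\pxyo{\onevec}{\ek}(-\half\Gamma) = \pxyo{\onevec}{\ek}(\Sigma\sonevec)$ from the same composition identity. For the entries, $\Pxyo{\onevec}{\ek} = \Id - \onevec\ek\T$, which subtracts the $k$-th coordinate from every coordinate. Thus for $i,j \neq k$,
\[
\big[\Pxyo{\onevec}{\ek}\, A\, \Pxyo{\ek}{\onevec}\big]_{ij} = A_{ij} - A_{ik} - A_{kj} + A_{kk}.
\]
With $A = -\half\Gamma$ and $\diag(\Gamma) = \zerovec$, this equals $\half(\Gamma_{ik} + \Gamma_{jk} - \Gamma_{ij})$, which is \cref{eq:SigmaKFromGamma}. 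The $k$-th row and column vanish, so dropping them yields $\Sigma\slr{k}$.

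The only step needing care is confirming the form of the projector, including which side carries the transpose. Everything else is direct substitution together with \cref{lemma:propertiesSymmetricProjection}.
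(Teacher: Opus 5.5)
Your proposal is correct and takes essentially the same route as the paper, whose proof just substitutes the explicit projectors $\Pevo = \Id - \onevec v\T/\onevec\T v$ into \cref{def:pivw} and compares the result with \cref{eq:SigmaFromGamma,eq:SigmaVFromGamma,eq:SigmaKFromGamma}; you additionally spell out the entrywise check and the composition step $\pevo\peeo = \pevo$. One small point: $-\half\Gamma$ is not positive semidefinite, so justify that composition step directly from \cref{lemma:pvwpvw}, where it is an operator identity valid for every matrix, rather than citing \cref{eq:symmetricPvwpvw}, since that lemma is phrased for $A \in \setpsdt{U}$.
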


Thus, the \HR{} variogram and covariance matrices
are precisely the matrices in \cref{prop:variogramProperties},
satisfying the same relationships.
In particular,
their precision matrices can also be related by covariance projections as
\begin{equation*}
    (\Sigma\sonevec)\pinv
    =
    \Theta\sonevec
    =
    \pxeo{\ek}(\Theta^{\ek})
    , \qquad
    (\Sigma^{\ek})\pinv
    =
    \Theta^{\ek}
    =
    \pxxo{\ek}(\Theta\sonevec)
    .
\end{equation*}

\begin{remark}
    \label{remark:inverseDroppingZeros}
    Since the kernel of the matrices
    $\Sigma^{\ek}, \Theta^{\ek} \in \Rdd$
    is spanned by $\ek$,
    their $k$-th row and column are necessarily filled with zero entries.
    Dropping these rows and columns yields invertible matrices
    $\Sigma\slr{k}, \Theta\slr{k} \in \Rdd[(d-1)]$.
    Taking the pseudoinverse of a block diagonal matrix
    is equivalent to taking the (pseudo)inverse of each block separately,
    and the pseudoinverse of a zero block is again a zero block.
    Therefore, the pseudoinverse relationship $(\Sigma^{\ek})\pinv = \Theta^{\ek}$
    is equivalent to $(\Sigma\slr{k})\inv = \Theta\slr{k}$.
\end{remark}

The \namecref{cor:covarianceStructuresEVT} shows that the
variogram, covariance, and precision matrices
related to \HR{} distributions
and their extremal functions
can be expressed through only three mappings,
the variogram map $\vario$,
the \MP{} inverse,
and the covariance projection $\puvo$
for different choices of $u$ and $v$.
In particular the relationship between different precision matrices
as projections has not been described before, to the best of our knowledge.

\begin{remark}
    \label{remark:generalExtremalFunctions}
    It is important to note that for general \multGP{} distributions,
    the covariance matrices of the extremal functions are
    not exactly related by the mappings described above.
    Since they are defined as projections of different random variables,
    specifically
    $W^v = \Pevo Y^v$ for $Y^v \deq Y \mid v\T Y > 0$,
    they are related by
    an exponential tilting of the distribution \citep{hentschel2025}.
    The \HR{} distribution is a special case with Gaussian extremal functions,
    whose covariance structure is invariant under this tilting.
\end{remark}

One way to resolve this issue
would be to consider projections $\Pevo Y$
of the original random variable $Y$,
or equivalently of the spectral random vector $S$ from
\cref{sec:backgroundMultGP}.
However, using $\tilde W^v \deq \Pevo Y$, the distribution of $Y$ cannot be expressed any more
as an independent sum $\tilde W^v + E\onevec$,
as the support of $E$ depends on the realization of $\tilde W^v$.

\subsection{Compositional data analysis}
\label{sec:covarianceStructuresCoDA}

In compositional data analysis,
only proportions of the components are analysed,
meaning that the data is invariant to multiplicative scaling.
When considering logarithmic transformations of the data,
this invariance translates to invariance to shifts by a multiple of $\onevec$,
again suggesting that projections with kernel $\onevec\R$ might be relevant.

In fact, when working with log-transformed data $y = \log w$,
the closure operator $\cl$ can be rewritten as a shift along $\onevec\R$ as
$\lcl(y) := \log(\cl(\exp y)) = y - \onevec \log \sum_i \exp y_i$.
The following \namecref{lemma:logratiosAsProjections}
shows that the log-ratio transformations from \cref{sec:backgroundCoDAAitchison}
can similarly be expressed as oblique projections.

\begin{lemma}
    \provenlabel{lemma:logratiosAsProjections}
    The additive and centred log-ratio transformations can be expressed as
    \begin{equation*}
        \ALR{k}:
        w
        \mapsto
        (\Pexo{\ek}(\log w))_{i \neq k}
        , \qquad
        \CLR:
        w
        \mapsto
        \Peeo(\log w)
        .
    \end{equation*}
\end{lemma}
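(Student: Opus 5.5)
The plan is to write down the projectors $\Pexo{\ek}$ and $\Peeo$ explicitly and then apply them to $y = \log w$. Both have kernel $\onevec\R$; their images are $\ocompl{\ek}$ and $\ocompl{\onevec}$ respectively.

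First I will use the general formula for a rank-one-kernel oblique projector. For $u,v$ with $v\T u \neq 0$, the projector along $u\R$ onto $\ocompl{v}$ is
\begin{equation*}
    \Puvo = \Id - \frac{u v\T}{v\T u}.
\end{equation*}
This is verified directly from \cref{def:obliqueProjection}. Applied to $u$ it gives $u - u = \zerovec$. Applied to any $x \perp v$ it gives $x$ back. Uniqueness follows because $u\R$ and $\ocompl{v}$ are complementary exactly when $v\T u \neq 0$. This formula is consistent with the matrices already appearing in \cref{eq:SigmaFromGamma,eq:SigmaVFromGamma}.

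Next, specialize to the two cases. With $v = \ek$ we get $\Pexo{\ek} = \Id - \onevec \ek\T$, so $(\Pexo{\ek} y)_i = y_i - y_k$. For $y = \log w$ this is $\log(w_i/w_k)$. Dropping the $k$-th entry, which is zero, gives the vector $(\LR(i,k))_{i\neq k}$, which is $\ALR{k}$. With $v = \onevec$ we get $\Peeo = \Id - \onevec\onevec\T/d$, so $(\Peeo y)_i = y_i - \frac1d\sum_j y_j$. For $y = \log w$ this equals $\log\bigl(w_i / (\prod_j w_j)^{1/d}\bigr) = \CLR(w)_i$.

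The only subtle point is that the log-ratio transforms were defined for $x \in \simplex$, while the lemma applies them to arbitrary $w \in (0,\infty)^d$. I will handle this by scale invariance. Since $\lcl(y) = y - c\,\onevec$ for a scalar $c$, and both projectors annihilate $\onevec$, we have $\Pexo{u}(\log \cl(w)) = \Pexo{u}(\log w)$ for $u \in \set{\ek,\onevec}$. Equivalently, the ratios $x_i/x_k$ and the geometric-mean ratios are unchanged by closure. So the identities hold whether the transforms are viewed as functions of $w$ or of $x = \cl(w)$. Beyond this bookkeeping, everything is a routine componentwise computation, and I do not expect any real obstacle.
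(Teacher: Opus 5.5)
Your proposal is correct and follows essentially the same route as the paper: write $\Pexo{\ek} = \Id - \onevec\ek\T$ using the rank-one formula of \cref{lemma:projectionFormula}, compute the projection componentwise, and use that closure acts as a shift along $\onevec\R$ in log coordinates, which the projection annihilates. The only difference is that you carry out the CLR case explicitly, whereas the paper says it ``follows similarly''.
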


Since the closure operator is equivalent to a shift along
$\onevec\R$ on the log-transformed data,
these projections can be applied directly to $\log w$,
without changing the result.
\cref{fig:codaCLR_3} shows the computation of the $\CLR$ and $\ALR{1}$ transformations
as projections of the log-transformed data,
for the same example as in \cref{fig:codaCLR_2}.
\begin{figure}
    \centering
    \resizebox{0.24\textwidth}{!}{%
        \inputtikzR{coda/x_raw}
    }
    \resizebox{0.24\textwidth}{!}{%
        \inputtikzR{coda/x_log}
    }
    \resizebox{0.24\textwidth}{!}{%
        \inputtikzR{coda/CLR_2}
    }
    \resizebox{0.24\textwidth}{!}{%
        \inputtikzR{coda/ALR1_2}
    }
    \caption{%
        Same example as in \cref{fig:codaCLR_2},
        but represented as taking logarithms,
        followed by the projection $\Pxyo{\onevec}{\onevec}$,
        and the projection $\Pxyo{\onevec}{\kev{1}}$.
    }
    \label{fig:codaCLR_3}
\end{figure}
Recalling the definitions of the compositional variation array $\Gamma$
and the related covariance matrices from
\cref{sec:backgroundCoDACovStructures}
we note that these can similarly be related through
covariance projections and the variogram map,
as shown in the following result.

\begin{corollary}
    \provenlabel{cor:covarianceStructuresCoDA}
    Let $\Sigma \in \setpsdt{\onevec}$ be the covariance of the log-transformed data $\log w$.
    Then the compositional variation array $\Gamma$ from \cref{eq:compositionalVariation}
    can be expressed as
    $\Gamma = \vario(\Sigma)$,
    the centred log-ratio covariance matrix $\Sigma\sonevec$ from \cref{eq:CenteredLogratioCovariance}
    can be expressed as
    \begin{align*}
        \Sigma\sonevec
        &=
        \peeo(-\half \Gamma)
        =
        \peeo(\Sigma)
        ,
    \end{align*}
    and the additive log-ratio covariance matrices $\Sigma\slr{k}$ from \cref{eq:LogratioCovariance} can
    be obtained by dropping the $k$-th row and column
    from the matrices
    \begin{align*}
        \Sigma^{\ek}
        =
        \pexo{\ek}(-\half \Gamma)
        =
        \pexo{\ek}(\Sigma)
        =
        \pexo{\ek}(\Sigma\sonevec)
        .
    \end{align*}
\end{corollary}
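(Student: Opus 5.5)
The plan is to reduce everything to \cref{lemma:logratiosAsProjections}, equation \cref{eq:covarianceOfProjection}, and \cref{prop:variogramProperties}.

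\textbf{The variation array.} First I would verify $\Gamma = \vario(\Sigma)$ directly from the definition. The closure $\lcl$ shifts $\log w$ along $\onevec\R$, so
\[
\log(x_i/x_j) = \log w_i - \log w_j .
\]
Hence
\[
\Gamma_{ij} = \Var(\log w_i - \log w_j) = \Sigma_{ii} + \Sigma_{jj} - 2\Sigma_{ij} = \vario(\Sigma)_{ij}.
\]
By \cref{prop:variogramProperties}\ref{prop:variogramProperties1} and the assumption $\Sigma \in \setpsdt{\onevec}$, this gives $\Gamma \in \setcnd$.

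\textbf{The centred log-ratio covariance.} By \cref{lemma:logratiosAsProjections}, $\CLR(x) = \Peeo \log w$. Applying \cref{eq:covarianceOfProjection} to the centred vector $\log w - \E \log w$, and using $\Peeo\T = \Peeo$, gives
\[
\Sigma\sonevec = \Peeo \Sigma \Peeo = \peeo(\Sigma).
\]
The identity $\peeo(\Sigma) = \peeo(-\half \Gamma)$ then follows from the second statement of \cref{prop:variogramProperties}\ref{prop:variogramProperties3} with $v = \onevec$, since $\Gamma = \vario(\Sigma)$. That statement holds for all matrices, so no membership check is needed. Alternatively, \cref{lemma:propertiesSymmetricProjection} gives $\peeo(\Sigma) \in \setpsd{\onevec}$, and the bijectivity in \ref{prop:variogramProperties3} then identifies it as $-\half\peeo(\Gamma)$.

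\textbf{The additive log-ratio covariances.} By \cref{lemma:logratiosAsProjections}, $\ALR{k}(x)$ consists of the entries $i \neq k$ of $\Pexo{\ek} \log w$. The $k$-th entry of that vector is zero, because the image of the projection is $\ocompl{\ek}$. So the covariance of $\Pexo{\ek}\log w$ is
\[
\Pexo{\ek}\Sigma\Pxyo{\ek}{\onevec} = \pexo{\ek}(\Sigma),
\]
using $(\Pexo{\ek})\T = \Pxyo{\ek}{\onevec}$. This matrix has zero $k$-th row and column, and deleting them gives exactly $\Sigma\slr{k}$ as defined in \cref{eq:LogratioCovariance}.

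The remaining equalities are established as follows:
\begin{itemize}
\item $\pexo{\ek}(\Sigma) = \pexo{\ek}(-\half\Gamma)$ is again \cref{prop:variogramProperties}\ref{prop:variogramProperties3}, with $v = \ek$, which is admissible since $\ek \not\perp \onevec$.
\item $\pexo{\ek}(\Sigma) = \pexo{\ek}(\Sigma\sonevec)$ is \cref{eq:symmetricPvwpvw}, since $\pexo{\ek}\peeo = \pexo{\ek}$. This needs only the matrix identity from \cref{lemma:pvwpvw}, so it holds for any $\Sigma$.
\end{itemize}

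\textbf{Main obstacle.} The only genuinely delicate point is bookkeeping of transposes and notation. The transpose of $\Pexo{\ek}$ is the projection along $\ek\R$ onto $\ocompl{\onevec}$. This must match the definition of $\pexo{\ek}$ as $\PUVo A \PVUo$ with $U = \onevec\R$ and $V = \ek\R$. I would also check that the closure step does not change the covariance of the log-ratios, which holds because every projection above has kernel containing $\onevec$.
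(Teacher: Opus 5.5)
Your proposal is correct and follows the same route as the paper. Both compute $\Gamma = \vario(\Sigma)$ entrywise via $\log(x_i/x_j) = \log w_i - \log w_j$, and both obtain the CLR and ALR covariances from \cref{lemma:logratiosAsProjections} combined with \cref{eq:covarianceOfProjection}. You also spell out steps the paper leaves implicit: the $-\half\Gamma$ forms follow from \cref{prop:variogramProperties}\ref{prop:variogramProperties3}, and the $\Sigma\sonevec$ form follows from \cref{eq:symmetricPvwpvw}.
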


Similar to the \HR{} case described above,
this implies that all relationships stated in \cref{prop:variogramProperties}
also hold for the compositional variation array and related covariance matrices.
The precision matrices corresponding to these covariance matrices
are less relevant in the descriptive analysis of compositional data,
but come up when modelling compositional data for example
with logistic-normal distributions \citep[][Chapter~6]{aitchison_statistical_1986},
and also satisfy the relationships described in the \namecref{prop:variogramProperties}.

%% file: tex/05_Methods.tex
\input{examples/danube_numbers.tex}
\input{examples/gemas_numbers.tex}
\input{examples/logistic_gauss_numbers.tex}

In the previous section 
we showed that the covariance structures in compositional data analysis
and extreme value theory
are related by the same family of mappings between covariance matrices and variograms.
So far, this connection is only an algebraic one,
since the data-generating processes and stochastic models
underlying the two settings are different.
However, having established the algebraic connection,
we can investigate on a case-by-case basis
whether certain models and methods
from one setting can be meaningfully transferred to the other setting.

The general idea behind these transfers is to
identify quantities (random variables, covariance matrices, etc.) in the two settings
that correspond to each other
in the sense that they take the same place in the general
algebraic structure from \cref{prop:variogramProperties,fig:commutesVOnevecGamma}.
Many of the matching quantities are already represented by the same
symbols in previous sections,
but for clarity
we give some explicit examples of corresponding quantities below.
\begin{align*}
    \text{Extremal variogram~}
    \Gamma\sonevec
    &\cong
    \text{Compositional variation array~}
    \Gamma
    , \\
    \text{Extremal function~}
    W\slr{k}
    &\cong
    \text{ALR-transformed data~}
    \ALR{k}(X)
    , \\
    \text{Extremal function~}
    W\sonevec
    &\cong
    \text{CLR-transformed data~}
    \CLR(X)
    , \\
    \text{Covariance of extremal function~}
    \Sigma\slr{k}
    &\cong
    \text{ALR covariance matrix~}
    \Sigma\slr{k}
    , \\
    \text{Covariance of extremal function~}
    \Sigma\sonevec
    &\cong
    \text{CLR covariance matrix~}
    \Sigma\sonevec
    .
\end{align*}
As discussed in \cref{remark:generalExtremalFunctions},
for extreme value distributions that are not \HR{},
quantities such as the extremal variogram $\Gamma^v$
depend on the choice of $v$.
We follow the approach of \citet{wan2025}
and use the symmetric case $v = \onevec$ as ``canonical'' choice for this vector;
an alternative would be to consider averages over different choices of $v$,
such as in the original definition of the extremal variogram
by \citet{engelkeVolgushev2022}.

Using these correspondences,
we can for example investigate whether a method
that estimates the compositional variation array $\Gamma$ based on CLR-transformed data,
can be adapted to estimate the extremal variogram $\Gamma\sonevec$ based on the extremal function $W\sonevec$.
Following this approach, we provide examples for method transfers in both directions.
In \cref{sec:intrinsicGraphicalModels},
we introduce a family of intrinsic graphical models for compositional data,
which is based on \HR{} graphical models in extreme value theory,
and show how to estimate the graphical structure of such models,
using \cglearn{}, an adaptation of the \eglearn{} algorithm from \citet{engelke2022a}.
In \cref{sec:pca},
we explore different dimensionality reduction methods
that were developed for compositional data,
and show how they can be applied to extreme value theory.

\subsection{Intrinsic logistic-normal graphical models}
\label{sec:intrinsicGraphicalModels}

\subsubsection{The model}
Sparse models for covariance structures are a common approach
in statistics,
as they allow for interpretable models and
more efficient inference compared to dense models \citep{engelke2021a}.
In compositional data analysis,
there are several different approaches to defining and estimating
such sparse covariance structures.
One approach,
used in methods such as \sparcc{} \citep{friedmannAlm2012_sparcc}
and \cclasso{} \citep{fangEtAl2015_cclasso},
is to assume and estimate a sparse covariance matrix
$\Sigma$ for the unobserved
log-transformed population.
Another approach is to
assume a normal distribution with
sparse precision matrix $\Theta$ for the log-transformed population.
The zeros in this precision matrix imply conditional independence
relations between the components of the log-transformed population,
commonly represented as graphical models.
This approach is taken for the methods
\spieceasi{} \citep{KurtzEtAl2015}
and
\gcoda{} \citep{fangHuangZhaoDeng2017},
which use different assumptions and estimation procedures to enforce sparsity in $\Theta$.

In the following, we suggest a different approach,
assuming that the
centred log-ratio transformed data
follows a (degenerate) normal distribution with sparse precision matrix $\Theta\sonevec$.
This is an assumption on the distribution of the observed
compositional data itself, rather than on an unobserved population.
Due to the degeneracy of the distribution,
the sparsity pattern of $\Theta\sonevec$ does not directly imply
conditional independence relations as in standard Gaussian graphical models.
Instead, such a model corresponds to an intrinsic Gaussian graphical model
in the sense of \citet{rue2005},
and we show in \cref{lemma:logisticNormalALR} that ALR-transformed data for each pivot
follows a non-degenerate Gaussian graphical model with the same sparsity pattern.

\begin{definition}
    Let $\Theta\sonevec \in \setpsd{\onevec}$ be a (sparse) precision matrix,
    and $G=(V,E)$ an undirected graph on $\set{1,\dots,d}$,
    satisfying $\Theta\sonevec_{ij} = 0$ for all $\set{i,j} \notin E$.
    We call $X$ an intrinsic logistic-normal graphical model
    with graph $G$, if,
    for some $\mu \in \ocompl{\onevec}$,
    it has the stochastic representation
    \begin{equation*}
        X
        =
        \cl\lr{\exp(Y_{\CLR})}
        , \qquad\text{where}\qquad
        Y_{\CLR}
        \sim
        \normal(\mu, (\Theta\sonevec)\pinv)
        .
    \end{equation*}
\end{definition}

One possible model for the original parts $W$ is given
by a log-normal distribution with covariance matrix
$\Sigma = (\Theta\sonevec)\pinv + s \onevec \onevec\T$,
where $s$ is a variance parameter for the shared scale factor.
Note that this model generally has a dense covariance
matrix $\Sigma$ and precision matrix
$\Theta = \Theta\sonevec + \onevec \onevec\T / (d^2 s)$.
Using \cref{lemma:logratiosAsProjections},
\cref{cor:covarianceStructuresCoDA}, and \cref{prop:variogramProperties},
we obtain the following result about the distribution of the ALR-transformed data.

\begin{lemma}
    \provenlabel{lemma:logisticNormalALR}
    Let $X$ be an intrinsic logistic-normal graphical model with graph $G$
    and precision matrix $\Theta\sonevec$.
    Then the ALR-transformed data $Y\slr{k} = \ALR{k}(X)$ is distributed as
    \begin{align*}
        Y\slr{k}
        &\sim
        \normal(\mu\slr{k}, \Sigma\slr{k})
        ,
    \end{align*}
    where $\mu\slr{k} = (\Pexo{\ek}\mu)_{i \neq k}$ and
    $\Sigma\slr{k}$ is obtained by removing the $k$-th row and column from
    $\Sigma^{\ek} = \pexo{\ek}((\Theta\sonevec)\pinv)$.
    Furthermore,
    $Y\slr{k}$ is a Gaussian graphical model whose graph is
    the subgraph of $G$ obtained by removing node $k$ and its incident edges.
\end{lemma}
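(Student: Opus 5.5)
Let $Y_{\CLR} \sim \normal(\mu, (\Theta\sonevec)\pinv)$ be the normal vector from the definition, so that $X = \cl(\exp(Y_{\CLR}))$. The plan is to write $\ALR{k}(X)$ as a linear image of $Y_{\CLR}$, read off its Gaussian law, and then get the graph from the inverse of $\Sigma\slr{k}$ using \cref{prop:variogramProperties}.

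\textbf{Step 1 (linear representation).} Since $\log \cl(\exp y) = y - \onevec \log\sum_i \exp y_i$ differs from $y$ only by a multiple of $\onevec$, and $\Pexo{\ek}$ annihilates $\onevec\R$, \cref{lemma:logratiosAsProjections} gives
\begin{equation*}
    \ALR{k}(X) = (\Pexo{\ek} Y_{\CLR})_{i\neq k}.
\end{equation*}
This is a linear map applied to a Gaussian vector, so $Y\slr{k}$ is Gaussian. Its mean is $(\Pexo{\ek}\mu)_{i\neq k}$. By \cref{eq:covarianceOfProjection} its covariance is the $(i,j\neq k)$ block of $\pexo{\ek}((\Theta\sonevec)\pinv)$. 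Dropping the $k$-th row and column loses nothing, because that row and column of $\Pexo{\ek} Y_{\CLR}$ (and hence of its covariance) vanish.

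\textbf{Step 2 (non-degeneracy).} Set $\Sigma\sonevec = (\Theta\sonevec)\pinv$, which lies in $\setpsd{\onevec}$ because $\Theta\sonevec \in \setpsd{\onevec}$. Then \cref{lemma:propertiesSymmetricProjection} with $U=\onevec\R$ and $V = \ek\R$ gives $\Sigma^{\ek} = \pexo{\ek}(\Sigma\sonevec) \in \setpsd{\ek}$. So $\Sigma^{\ek}$ has kernel exactly $\ek\R$, and by \cref{remark:inverseDroppingZeros} the reduced matrix $\Sigma\slr{k}$ is invertible. Hence $Y\slr{k}$ is a non-degenerate Gaussian.

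\textbf{Step 3 (graph structure).} \cref{prop:variogramProperties4} applies because $\vario(\Sigma\sonevec) = \vario(\Sigma^{\ek})$ by \cref{prop:variogramProperties2}. It gives
\begin{equation*}
    \Theta^{\ek} = (\Sigma^{\ek})\pinv = \pxxo{\ek}(\Theta\sonevec) = \Pxyo{\ek}{\ek}\, \Theta\sonevec\, \Pxyo{\ek}{\ek}.
\end{equation*}
Here $\Pxyo{\ek}{\ek}$ is the orthogonal projector that zeroes the $k$-th coordinate, so $\Theta^{\ek}_{ij} = \Theta\sonevec_{ij}$ for $i,j\neq k$. By \cref{remark:inverseDroppingZeros}, $\Theta\slr{k} = (\Sigma\slr{k})\inv$ is $\Theta\sonevec$ with row and column $k$ removed. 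Therefore $\Theta\slr{k}_{ij}=0$ whenever $\{i,j\}\notin E$ and $i,j\neq k$. By the standard Gaussian graphical model characterization, $Y\slr{k}$ is Markov to $G$ with node $k$ and its incident edges removed.

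\textbf{Main obstacle.} Nothing here is deep; the care is in the bookkeeping. One point is checking that the closure shift really vanishes under $\Pexo{\ek}$. The other is confirming that $\pxxo{\ek}$ is orthogonal, i.e.\ that it just zeroes row and column $k$ rather than mixing entries.
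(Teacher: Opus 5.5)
Your proposal is correct and follows essentially the same route as the paper. You obtain the covariance of $\ALR{k}(X)$ as $\pexo{\ek}(\Sigma\sonevec)$, going directly through \cref{lemma:logratiosAsProjections} and the linear image of $Y_{\CLR}$ rather than via \cref{cor:covarianceStructuresCoDA}; you then obtain the precision as $\pxxo{\ek}(\Theta\sonevec)$ by \cref{prop:variogramProperties}, and this projection only alters row and column $k$. Your explicit checks of Gaussianity, the mean, and non-degeneracy of $\Sigma\slr{k}$ fill in details that the paper's proof leaves implicit.
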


\subsubsection{Inference}

A natural approach to estimate the sparse precision matrix $\Theta\sonevec$
is to use an $\ell_1$-penalized likelihood approach,
maximizing
\begin{align*}
    f(\Theta\sonevec)
    &=
    \log\pdet{\Theta\sonevec}
    + \half\trace(\Theta\sonevec \hat\Gamma)
    - \lambda \sum_{i\neq j} \abs{\Theta\sonevec_{ij}}
    ,
\end{align*}
where $\hat\Gamma$ is the empirical variogram of the data.
Note that
$-\half\trace(\Theta\sonevec\hat\Gamma) = \trace(\Theta\sonevec\Sigma\sonevec)$.
However, it has been shown that this approach does not
learn a sparse graph \citep{yingEtAl2023},
and a number of alternative approaches have been suggested in
extremal graphical modelling literature.

One such algorithm is \eglearn{} \citep{engelke2022a}
which estimates the sparsity pattern of~$\Theta\sonevec$
by applying a graphical base learner $\Acal$ to each
extremal function $Y\slr{k}$ separately,
and then aggregating the results by a majority vote across
all $k \neq i,j$ for each possible edge $\set{i,j}$.
Since the extremal functions have full-rank precision matrices
$\Theta\slr{k} \in \Rdd[(d-1)]$,
known structure estimation algorithms
such as neighbourhood selection \citep{MeinshausenBuhlmann2006}
and graphical lasso \citep{YuanLin2007}
can be used as base learner $\Acal$.

In \cref{alg:cglearn}, we propose \cglearn{},
an adaptation of the \eglearn{} algorithm to the setting of compositional data analysis.
The idea is the same as in extreme value theory,
but instead of computing the extremal functions,
we compute the ALR-transformed data $\ALR{k}(X)$,
to which we apply the graphical base learner $\Acal$.
As long as the base learner eventually identifies the correct
sparsity pattern of each $\Theta\slr{k}$,
\cref{lemma:logisticNormalALR} ensures that
the correct sparsity pattern of~$\Theta\sonevec$ is identified as well.

\begin{algorithm}[H]
    \caption{\cglearn{} (adaptation of \eglearn{})}
    \label{alg:cglearn}
    \begin{algorithmic}[1]
        \Statex \textbf{Input:} \parbox[t]{0.85\linewidth}{%
            Compositional data $X\in(0,\infty)^{n\times d}$,\\
            Adjacency matrix base learner
            $\mathcal{A}: \Rd[n \times (d-1)] \rightarrow \set{0,1}^{(d-1) \times (d-1)}$
        }
        \Statex \textbf{Output:} Estimated graph $\widehat{G}=(V,\widehat{E})$

        \State $\widehat{E}\leftarrow\emptyset$

        \For{$k\in V$}
            \State $\widehat{A}^{(k)}\leftarrow\mathcal{A}(\ALR{k}(X))$
            \State $\widetilde{A}^{(k)}\leftarrow$ augment $\widehat{A}^{(k)}$ with a zero row/column at position $k$
        \EndFor

        \For{$i,j\in V,\;i<j$}
            \State $C\leftarrow\#\{k\in V\without{i,j}:\;\widetilde{A}^{(k)}_{ij}=1\}$
            \If{$C>(d-2)/2$}
                \State $\widehat{E}\leftarrow\widehat{E}\cup\set{\set{i,j}}$
            \EndIf
        \EndFor

        \State \Return $\widehat{G}=(V,\widehat{E})$
    \end{algorithmic}
\end{algorithm}

Importantly, this algorithm only estimates the graphical structure of the model,
that is, the sparsity pattern of $\Theta\sonevec$,
not the actual values of the non-zero entries.
An approach for subsequent parameter estimation for a model with this sparsity pattern
based on graphical completions of partial variogram matrices is suggested in \cite{hentschel2023}.
Other approaches to sparse \HR{} graphical models in extreme value theory
are introduced by \cite{wan2023graphical}, \cite{roettger2023}, and \cite{lederer2023extremes}.

\subsubsection{Simulation and application}
In \cref{fig:cglearnLogisticGauss},
we illustrate the application of \cglearn{} to a simulated dataset.
The data consists of $\logisticgaussValueNDataRows$ samples
from a $\logisticgaussValueNNodes$-dimensional
intrinsic logistic-normal graphical model,
parametrized by a randomly generated sparse precision matrix $\Theta\sonevec$
with $\logisticgaussValueNEdgesBase$~non-zero off-diagonal entries (edges).
As base learner $\Acal$, we use neighbourhood selection
\citep{MeinshausenBuhlmann2006}.
The number of edges depends on a penalty parameter;
choosing it such that the number of edges is the same as in the true graph,
\cglearn{} identifies $\logisticgaussValueNEdgesEglearnBaseAgree$
out of the $\logisticgaussValueNEdgesBase$ edges correctly.

\begin{figure}
    \centering
    \resizebox{0.3\textwidth}{!}{
        \inputtikzR{logistic_gauss/graph_base}
    }
    \includegraphics[width=0.3\textwidth]{plots/edge_vs_penalty_logistic_gauss_eglearn.pdf}
    \resizebox{0.3\textwidth}{!}{
        \inputtikzR{logistic_gauss/graph_eglearn}
    }
    \caption{%
        True graph structure of the simulated intrinsic logistic-normal graphical model (left),
        and
        number of edges in the graph estimated by \cglearn{} for this dataset,
        as a function of the penalty parameter (centre).
        The true edge count, $\logisticgaussValueNEdgesBase$,
        is marked by a horizontal line;
        the corresponding graph estimated by \cglearn{} is shown in the right panel.
    }
    \label{fig:cglearnLogisticGauss}
\end{figure}

\cref{fig:cglearnGemas} illustrates the application of \cglearn{}
to the \gemas{} dataset,
a compositional dataset of geochemical measurements
available in the R package \robCompositions{} \citep{robCompositions}.
For illustrative purposes,
we also apply the method \cclasso{} \citep{fangEtAl2015_cclasso} to the same dataset,
using a grid of penalty parameters for both methods,
in order to compare graphs with equal numbers of edges.
Since there is no ground truth for the underlying graph,
we compare the two methods for an ad-hoc choice of penalty parameters
yielding graphs with $\gemasValueNEdgesEglearn$ edges each.
For this choice,
the two methods agree on $\gemasValueNEdgesEglearnCClassoAgree$ edges.
The ratio of shared edges between the two methods as a function of
total number of edges in the estimated graphs is shown in the right panel of \cref{fig:cglearnGemas}.
Note that the two methods are based on different underlying models and assumptions,
and do not aim to estimate the same graph.
Similar comparisons with $\gcoda$ \citep{fangHuangZhaoDeng2017}
and on different datasets give comparable results.

\begin{figure}
    \centering
    \resizebox{0.65\textwidth}{!}{
        \inputtikzR{gemas/graph_eglearn}
        \hspace{0.05\textwidth}
        \inputtikzR{gemas/graph_cclasso}
    }
    \includegraphics[width=0.3\textwidth]{plots/agreement_gemas.pdf}
    \caption{%
        Graph structures estimated for the \gemas{} dataset
        by \cglearn{} (left) and \cclasso{} (centre).
        Vertices correspond to components of the composition,
        and edges indicate non-zero entries in the estimated precision
        and correlation matrix, respectively.
        The right panel shows
        the ratio of shared edges between \cglearn{} and \cclasso{}
        when choosing penalty parameters such that
        both graphs have the number of edges indicated on the X-axis of the plot.
    }
    \label{fig:cglearnGemas}
\end{figure}

\subsection{Dimensionality reduction for extremes}
\label{sec:pca}
In this \namecref{sec:pca},
we explore different dimensionality reduction methods
that were developed for compositional data,
and how they can be applied to extreme value theory.
The methods discussed here,
are based mostly on the compositional variation array $\Gamma$,
and the centred log-ratio covariance matrix $\Sigma\sonevec$.

\subsubsection{Log-ratio analysis for extremes}
Principal component analysis (PCA)
is widely used for dimensionality reduction and visualization.
Naive approaches in compositional data analysis apply standard PCA to the closed data,
which is problematic
since it ignores the unit-sum constraint and relative nature of compositional data
\citep{aitchison_pca_1983}.
Working with additive log-ratio transformed data is also problematic
due to the dependence on the choice of the pivot component.
One symmetric solution,
known as log-ratio analysis (LRA)
is derived by considering linear combinations of all log-ratios
$\log\lr{x_i/x_j}$,
which is equivalent to using so called log-contrasts,
that is,
convex combinations of the form $\sum_i v_i \log x_i$,
with $\sum_i v_i = 0$.
The variance maximizing log-contrasts are then given by
the eigenvectors of the CLR covariance matrix
$\Sigma\sonevec \in \setpsd{\onevec}$ \citep{aitchison_pca_1983,aitchison_statistical_1986}.

In extreme value theory,
\citet{wan2025} introduces an approach for PCA
based on the covariance matrix of the so-called profile random vectors $U$.
In the notation of this paper,
the vector $U$ corresponds to the extremal function $W\sonevec$,
and its covariance matrix is $\Sigma\sonevec$.
The PCA based approximation of $W\sonevec$
is therefore given by
\begin{equation*}
    W\sonevec
    \approx
    \PX{q_{1}} W\sonevec
    +
    \PX{q_{2}} W\sonevec
    +
    \dots
    ,
\end{equation*}
where $q_1, q_2, \dots$ are the eigenvectors of $\Sigma\sonevec$,
and $\PX{s}$ denotes the orthogonal projection onto the span of $s$.
Hence, on an algebraic level,
this approach is equivalent to LRA in compositional data analysis,
as both are based on an eigendecomposition of the same covariance matrix $\Sigma\sonevec$.

Based on this connection,
we can transfer advanced methods from LRA
to the setting of extremes.
One such method is weighted LRA \citep{greenacreLewi2009distributional,hronEtAl2021},
which assigns different weights to different parts of a composition,
as well as each observation in the dataset.
In \cref{fig:lraDanube},
we illustrate the application of this method to the Danube dataset,
which is a commonly used reference dataset in extreme value theory \citep{asadi2015extremes}.
To do so, we use the R package \easycoda{} \citep{greenacre_book_2018},
which provides an implementation of LRA,
as well as plotting functions used to create biplots of the results.
In a biplot,
observations are represented by labelled points
(or just the labels)
and components by labelled arrows,
such that the scalar product between an arrow and a label
gives the approximate value of the corresponding centred log-ratio for that observation
\citep{gabriel1971}.
In the weighted biplot,
we put more emphasis on high-flow parts of the river network
by assigning the average absolute water flow at each station
as weights for the columns.
Both versions recover the flow structure of the river network,
with connected stations, such as $\set{V25, V26, V27}$, being close to each other in the biplot.
The weighted version puts more emphasis on the main flow of the river,
resulting in a clearer separation of the largest station ($V1$) from others.

\begin{figure}
    \centering
    \resizebox{0.48\textwidth}{!}{
        \inputtikzR{lra/lra_unweighted}
    }
    \resizebox{0.48\textwidth}{!}{
        \inputtikzR{lra/lra_weighted}
    }
    \caption{%
        Biplots of the first two log-ratio principal components for the Danube dataset,
        without weights (left) and with weights
        based on absolute flow volumes per station (right).
        The flow graph illustrating the connections between
        stations is shown in \cref{fig:dimReductionDanube}.
        In each plot, blue labels correspond to observations in the dataset,
        and labelled arrows correspond to the components.
        The geometric arrangement is such that the scalar product between an arrow and a label
        gives the approximate value of the corresponding centred log-ratio for that observation.
        The part of the total variance explained by each principal component
        is shown in parentheses in the axis labels.
    }
    \label{fig:lraDanube}
\end{figure}

\subsubsection{Interpretable dimension reduction}
\label{sec:interpretableDimReduction}

Several other dimension-reduction methods in compositional data analysis are based on
the total log-ratio variability of a composition,
defined in \cref{eq:totalVariability}
as the sum of all log-ratio variances $\Gamma_{ij}$, $i<j$.
One such method, proposed by \citet{greenacre_variable_selection_2018},
is a stepwise log-ratio selection procedure.
At each step, the method selects a pair of components $i,j$ such that the log-ratio
$\log(x_i/x_j)$ explains the largest amount of remaining variation
in a redundancy-analysis criterion.

The selected log-ratios can be represented as edges in a graph,
often as a tree when $d-1$ independent log-ratios are chosen,
as shown in the right panel of \cref{fig:dimReductionDanube}.
This graph should be interpreted as a visualization of the
result of a variable selection procedure,
rather than a conditional-independence graph.
In particular, selected edges tend to correspond to component pairs
with high log-ratio variance,
whereas edges in conditional-independence graphs
are typically associated with smaller variances between them.

A complementary approach is the amalgamation-based hierarchical clustering method
of \citet{greenacre2019}.
Instead of selecting highly informative log-ratios,
this method successively merges pairs of components into amalgamations,
such that each merge causes the smallest possible loss of total log-ratio variability.
The result is a hierarchical clustering of the components,
with early amalgamations corresponding to parts that can be combined
with little loss of relative information.

Both methods are descriptive and do not require a stochastic model
for the data-generating process.
They can therefore be transferred naturally to exploratory analysis
and dimension reduction in extreme value theory.
In this setting, pairwise log-ratios $\log(x_i/x_j)$
are replaced by pairwise differences of extremal functions,
for instance $W_i\sonevec - W_j\sonevec$,
and the total variability is computed by the same formula
for the corresponding extremal variogram
$\Gamma\sonevec$.
In \cref{fig:dimReductionDanube},
we illustrate both procedures on the Danube dataset,
again using implementations from the R package \easycoda{}
\citep{greenacre_book_2018}.

\begin{figure}
    \centering
    \resizebox{0.32\textwidth}{!}{%
        \inputtikzR{lr_selection/danube_aclust_clusters}
    }
    \resizebox{0.32\textwidth}{!}{%
        \inputtikzR{lr_selection/danube_aclust_dendrogram}
    }
    \resizebox{0.32\textwidth}{!}{%
        \inputtikzR{lr_selection/danube_step_graph}
    }
    \caption{%
        Application of amalgamation-based hierarchical clustering
        and stepwise log-ratio selection to the Danube dataset.
        The Y-axis in the dendrogram (centre) corresponds to
        the loss of total log-ratio variability
        caused by each amalgamation.
        In the true flow graph between the stations (left),
        the nodes are coloured according to the clusters obtained
        by the hierarchical clustering method,
        using an ad-hoc height of $0.1$ to define clusters.
        In the log-ratio selection graph (right),
        thicker edges indicate log-ratios selected earlier in the stepwise procedure.
    }
    \label{fig:dimReductionDanube}
\end{figure}

\subsubsection{Weighted variogram estimator}
\label{sec:weightedRows}

We close this section with a related estimation idea suggested by the row-weighted geometry of weighted LRA.
In practical applications of extreme value theory,
one usually considers only observations above some threshold $p \in (0,1)$ as extreme,
and uses these for inference.
This hard threshold is somewhat arbitrary,
and several approaches have been suggested to alleviate the issue
\citep{smithTawnColes1997,ledfordTawn1997,dupuis1998,stein2023weighted}.

In weighted LRA
\citep{greenacreLewi2009distributional,hronEtAl2021},
different weights are assigned to different observations in the dataset,
emphasizing observations more or less, based on some heuristic criterion.
Below, we transfer this approach to extreme value theory,
and define a weighted extremal variogram estimator.
Choosing the weighting scheme based on the overall magnitude of the observation,
the hard threshold described above can be softened
by giving more extreme observations higher weights.
For a data matrix $Y \in \Rd[n \times d]$,
sampled from (an approximation of) $Y\sonevec = Y | \set{\onevec\T Y > 0}$,
and unit-sum vector $r \in \Rd[n]$ of non-negative weights,
let
\begin{equation*}
    \widehat\Gamma^{r}_{ij}
    =
    \frac{1}{
        1 - \sum_{k=1}^n r_k^2
    }
    \sum_{k=1}^n
    r_k
    \lr{
        \lr{Y_{ki}-Y_{kj}}
        -
        \bar Y^{ij}_r
    }^2
    , \quad\text{with}\quad
    \bar Y^{ij}_r
    =
    \sum_{k=1}^n
    r_k
    \lr{Y_{ki}-Y_{kj}}
    .
\end{equation*}
The multiplicative factor $1/(1 - \sum_{k=1}^n r_k^2)$
is chosen such that the estimator is unbiased in the limiting model;
using uniform weights $r_k = 1/n$ recovers the
usual factor $n/(n-1)$ for the unbiased sample variance.
Considering the decomposition $Y\sonevec = W\sonevec + \onevec E$,
with $W\sonevec$ supported on $\ocompl{\onevec}$ and independent of $E$,
a natural choice for these weights is to be based on the value of $d\inv\onevec\T Y\sonevec$.
Indeed, in the limiting model,
this quantity is exactly $d\inv\onevec\T \onevec E = E$,
and thus independent of the extremal function~$W\sonevec$,
whereas in the practical setting of finite samples,
it can be used as a proxy for the ``extremeness'' of an observation.
\cref{fig:weightedVariogram} illustrates the performance of this weighted variogram estimator
for different thresholds and weighting schemes.
To compare the performance of different estimators,
we simulate $500$ datasets of size $n=1000$
from a $10$-dimensional max-stable distribution in the domain of attraction of a \HR{} \multGP{}
distribution with known variogram $\Gamma$.
To obtain approximate samples of $Y\sonevec$,
we threshold and margin-transform the simulated data
before conditioning on the event $\onevec\T Y > 0$.
MSE, bias, and variance of each estimator are then estimated from the $500$ simulated datasets
and shown in the figure.
We use equal weights of $r_k = 1/n$ as baseline,
comparing them to
weights proportional to the radial part $r_k \propto d\inv \onevec\T Y_{k\cdot}$,
and the cumulative distribution function
of the radial part $r_k \propto 1 - \exp(-d\inv \onevec\T Y_{k\cdot})$.
While we observe the usual bias-variance tradeoff with respect to the choice of threshold $p$,
the weighted estimators slightly outperform the unweighted estimator in terms of mean squared error
in this setting.
\begin{figure}
    \centering
    \resizebox{0.48\textwidth}{!}{
        \inputtikzR{weightedVariogram/mse_vs_p}
    }
    \resizebox{0.48\textwidth}{!}{
        \inputtikzR{weightedVariogram/bias_variance}
    }
    \caption{%
        Performance of the weighted variogram estimator with different weighting schemes
        as a function of the threshold $p$ (left),
        and corresponding bias-variance decompositions (right).
        We compare the unweighted estimator with equal weights $r_k = 1/n$ to two weighted estimators,
        one with weights proportional to the radial part
        $r_k \propto d\inv \onevec\T Y_{k\cdot}$,
        and one with weights proportional to the cumulative distribution function (CDF)
        of the radial part
        $r_k \propto 1 - \exp(-d\inv \onevec\T Y_{k\cdot})$.
    }
    \label{fig:weightedVariogram}
\end{figure}

%% file: examples/danube_numbers.tex
\newcommand{\danubeValueNNodes}{31}
\newcommand{\danubeValueNDataRows}{117}
\newcommand{\danubeValueNEdgesEglearn}{29}
\newcommand{\danubeValueNEdgesCClasso}{29}
\newcommand{\danubeValueNEdgesGCoda}{29}
\newcommand{\danubeValueNEdgesBase}{30}
\newcommand{\danubeValueNEdgesEglearnBaseAgree}{24}
\newcommand{\danubeValueNEdgesEglearnCClassoAgree}{19}
\newcommand{\danubeValuePctEdgesEglearnBaseAgree}{80.0}

%% file: examples/gemas_numbers.tex
\newcommand{\gemasValueNNodes}{19}
\newcommand{\gemasValueNDataRows}{2108}
\newcommand{\gemasValueNEdgesEglearn}{17}
\newcommand{\gemasValueNEdgesCClasso}{17}
\newcommand{\gemasValueNEdgesGCoda}{17}
\newcommand{\gemasValueNEdgesBase}{18}
\newcommand{\gemasValueNEdgesEglearnBaseAgree}{1}
\newcommand{\gemasValueNEdgesEglearnCClassoAgree}{8}
\newcommand{\gemasValuePctEdgesEglearnBaseAgree}{5.0}

%% file: examples/logistic_gauss_numbers.tex
\newcommand{\logisticgaussValueNNodes}{10}
\newcommand{\logisticgaussValueNDataRows}{1000}
\newcommand{\logisticgaussValueNEdgesEglearn}{27}
\newcommand{\logisticgaussValueNEdgesCClasso}{27}
\newcommand{\logisticgaussValueNEdgesGCoda}{27}
\newcommand{\logisticgaussValueNEdgesBase}{27}
\newcommand{\logisticgaussValueNEdgesEglearnBaseAgree}{25}
\newcommand{\logisticgaussValueNEdgesEglearnCClassoAgree}{18}
\newcommand{\logisticgaussValuePctEdgesEglearnBaseAgree}{92.0}

%% file: tex/06_Outlook.tex
\subsection{The problem of zeros/small values}
\label{sec:zerosSmallValues}

Throughout the paper,
we assumed strictly positive data in compositional data analysis,
to ensure that the log-ratio transformations are well-defined.
Similarly, in extreme value theory,
we assumed full asymptotic dependence,
which implies that the support of the \multGP{} distribution
on standard Pareto scale (discussed in \cref{remark:ParetoMargins})
is also supported on the strictly positive orthant.
Both fields have developed methods for handling zeros and small values,
occurring in practice,
such as zero-replacement strategies
and mixture models in compositional data analysis \citep{lubbeFilzmoserTempl2021,greenacre_compositional_2021},
and methods for identifying subfaces of the simplex with mass
in extreme value theory \citep{meyerWintenberger2024,chiapinoSabourinSegers2019,mourahibKirilioukSegers2024}.
Transferring such methods from one field to the other 
is a promising future research direction.

\subsection{Other kernels}
\label{sec:otherKernels}

The results in \cref{sec:obliqueProjections}
are given for general kernels $U$,
but the covariance structures in \cref{sec:covarianceStructures}
only involve the kernel spanned by the one-vector $\onevec$,
which is a natural choice in both
extreme value theory and compositional data analysis.

An interesting question is whether there are other natural directions to consider,
and whether these correspond to meaningful definitions of variogram matrices.
For example, \citet{egozcue_isometric_2004} suggests a ``linear compositional process'',
which includes an exponential decay rate $a_i$ for each component $i$,
shifting the log-transformed data by $a_i t$ for some time parameter $t$.
For unknown $t$ and non-trivial $a$, this model is identifiable only up to shifts along
the vectors $\onevec$ and $a = (a_1, \dots, a_d)\T$,
resulting in covariance structures with two-dimensional kernel $U = \laspan\lr{\onevec, a}$.

In the context of intrinsic graphical models,
the assumption of a one-dimensional kernel spanned by $\onevec$
always implies a connected underlying graph.
The precision matrix of a graph with two connected components,
say $C_1$ and $C_2$,
has the form
\begin{equation*}
    \Theta
    =
    \begin{pmatrix}
        \Theta_{C_1} & 0 \\
        0 & \Theta_{C_2}
    \end{pmatrix}
    ,
\end{equation*}
and necessarily has the indicator vectors $\indic{C_1}$ and $\indic{C_2}$ in its kernel.
Algebraically,
the corresponding covariance structures can still be described using
the results in \cref{sec:obliqueProjections} with this kernel.
However,
stochastically, this would correspond to a disconnected extremal graph \citep{eng_iva_kir}, and the interaction between the two components
has to be modelled separately,
for instance by a mixture model.

%% file: tex/A_Projections.tex
In the following we give more details on the properties
of oblique projections and the ``covariance projections'' defined in \cref{sec:obliqueProjections}.
In particular, we illustrate the relationships with commutative diagrams
and explicitly give intermediate results for projections of non-symmetric, indefinite matrices,
which are used to prove the relationships for symmetric, positive semidefinite matrices.
The proofs themselves are deferred to \cref{sec:proofsProjections},
as they are rather technical and not very insightful on their own.

First, we give an explicit formula to compute the projection matrix $\PUVo$
defined in \cref{def:obliqueProjection}
as well as expressions for its transpose and pseudoinverse.
\begin{lemma}
    \provenlabel{lemma:projectionFormula}
    Let $U$ and $\ocompl{V}$ be complementary subspaces,
    and let $A$, $B$ be matrices whose columns
    form bases of $\ocompl{V}$ and $\ocompl{U}$, respectively.
    Then, we have
    \begin{equation*}
        \PUVo
        =
        A (B\T A)\inv B\T
        .
    \end{equation*}
    Setting $B = A$ gives the formula for the orthogonal projection $\PXY{V}{\ocompl{V}}$.
    In the one-dimensional case where $U = u\R$ and $V = \set{v}\orth$,
    we have
    \begin{equation*}
        \Pxyo{u}{v}
        =
        \Id - \frac{uv\T}{v\T u}
        .
    \end{equation*}
\end{lemma}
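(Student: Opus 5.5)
The plan is to verify directly that $P := A(B\T A)\inv B\T$ satisfies the two defining conditions in \cref{def:obliqueProjection}, with kernel $U$ and image $\ocompl{V}$. Uniqueness of the oblique projector then gives $P = \PUVo$. Throughout, $A \in \R^{d\times m}$ and $B \in \R^{d\times m}$ with $m = d - \dim U$. Both have $m$ columns, because $\dim \ocompl{V} = d - \dim U$ by complementarity and $\dim \ocompl{U} = d - \dim U$ trivially.

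The first and main step is to show that the $m\times m$ matrix $B\T A$ is invertible, which is needed for the formula to make sense. Suppose $B\T A x = \zerovec$. Then $Ax \in \ocompl{V}$, and $Ax$ is orthogonal to every column of $B$, hence to all of $\ocompl{U}$. So $Ax \in (\ocompl{U})\orth = U$. Complementarity gives $U \cap \ocompl{V} = \set{\zerovec}$, so $Ax = \zerovec$. Since the columns of $A$ are linearly independent, $x = \zerovec$. This step is the only point where complementarity is used, so I expect it to be the crux of the argument.

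Next I check the two conditions.
\begin{itemize}
    \item For $u \in U$, each column of $B$ lies in $\ocompl{U}$, so $B\T u = \zerovec$ and therefore $Pu = \zerovec$.
    \item For $w \in \ocompl{V}$, write $w = Ay$. Then $Pw = A(B\T A)\inv B\T A y = Ay = w$.
\end{itemize}
Hence $P$ is the projector along $U$ onto $\ocompl{V}$.

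For the special cases I specialise the formula.
\begin{itemize}
    \item Taking $B = A$ forces $\ocompl{U} = \ocompl{V}$, so $U = V$. The kernel $V$ is then orthogonal to the image $\ocompl{V}$, which gives the orthogonal projector $\PXY{V}{\ocompl{V}} = A(A\T A)\inv A\T$.
    \item In the one-dimensional case $U = u\R$, $V = v\R$, complementarity means $v\T u \neq 0$. Rather than choosing explicit bases of hyperplanes, I verify directly that $Q := \Id - uv\T/(v\T u)$ satisfies the defining conditions. First, $Qu = u - u = \zerovec$. Second, for $w \perp v$ we have $Qw = w$. Uniqueness then gives $\Puvo = Q$.
\end{itemize}
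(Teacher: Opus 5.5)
Your proof is correct and follows essentially the same route as the paper: you show $B^\top A$ is invertible by forcing $Ax \in U \cap V^{\perp} = \{\mathbf{0}\}$, verify the two defining conditions of \cref{def:obliqueProjection} directly, and check the one-dimensional formula $\Id - uv^\top/(v^\top u)$ by direct computation. Your explicit dimension count showing that $B^\top A$ is square is a small detail the paper leaves implicit.
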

\begin{lemma}
    \provenlabel{lemma:projectorMPInv}
    We have
    \begin{align}
        \PUVo\T
        &=
        \PVUo
        \nonumber
        , \\
        \PUVo\pinv
        &=
        \PX{\ocompl{V}}
        \PX{\ocompl{U}}
        \label{eq:projectionMPInv}
        , \\
        \PUVo\pinvT
        &=
        \PX{U}
        \PX{V}
        \nonumber
        .
    \end{align}
\end{lemma}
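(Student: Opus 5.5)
The plan is to work directly from \cref{lemma:projectionFormula}, which gives an explicit factorization of the oblique projector, and from the four \MP{} conditions of \cref{def:MPInverse}. Write $P := \PUVo = A(B\T A)\inv B\T$, where the columns of $A$ form a basis of $\ocompl{V}$ and the columns of $B$ form a basis of $\ocompl{U}$. Complementarity of $U$ and $\ocompl{V}$ guarantees that $B\T A$ is invertible.

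\textbf{Transpose.} Transposing the factorization gives $P\T = B(A\T B)\inv A\T$. Read through \cref{lemma:projectionFormula} with the roles of $A$ and $B$ swapped, this is exactly the formula for the projector whose kernel is $V$ and whose image is $\ocompl{U}$, namely $\PVUo$. This step is purely formal.

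\textbf{Pseudoinverse.} I will verify the four Penrose conditions for the candidate $Q := \PX{\ocompl{V}}\PX{\ocompl{U}}$. For the orthogonal projectors $\PX{\cdot}$ I use the convention fixed by the remark in \cref{lemma:projectionFormula}: setting $B=A$ gives $\PX{X}$ explicitly as $C(C\T C)\inv C\T$ for a basis matrix $C$. Each of $\PX{\ocompl{V}}$ and $\PX{\ocompl{U}}$ is therefore symmetric and idempotent. The main tools are the absorption identities for a projector and an orthogonal projector that share an image or share a row space; these follow from \cref{lemma:pvwpvw} or directly from the factorization. Concretely, one of the two orthogonal projectors is absorbed by $P$ on the left and the other on the right. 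With these identities, $PQ$ and $QP$ collapse to single orthogonal projectors, which are symmetric; this gives conditions (3) and (4). Conditions (1) and (2), $PQP=P$ and $QPQ=Q$, then follow from the same identities together with idempotence.

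\textbf{Transpose of the pseudoinverse.} I will not verify this separately. Transposing $Q$ reverses the order of two symmetric factors, and I combine this with $(P\pinv)\T = (P\T)\pinv$ (\cref{sec:MPInverse}) applied to $P\T = \PVUo$. This yields $\PUVo\pinvT = \PX{U}\PX{V}$ once the subspaces in the subscripts are identified under the paper's orthogonal-complement convention.

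\textbf{Main obstacle.} The hard part is bookkeeping, not computation. Which orthogonal projector is absorbed on which side of $P$ depends on whether $\PX{X}$ means projection onto $X$ or along $X$. Getting this wrong produces the reversed product. So before checking the Penrose conditions, I will pin the convention down from the $B=A$ case of \cref{lemma:projectionFormula}: there $\PXY{V}{\ocompl{V}}$ has kernel $V$ and image $\ocompl{V}$, and I will match each subscript in \cref{eq:projectionMPInv} against that. As a consistency test, I will also check that the candidate $Q$ has kernel $V$ and image $\ocompl{U}$, as required of $P\pinv$. If the ranges do not match under the first reading, I will switch to the complementary reading rather than alter the stated formula.
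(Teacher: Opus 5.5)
Your overall strategy (absorption identities from \cref{lemma:pvwpvw} plus the Penrose conditions, with the transpose read off from the factorization) matches the paper's, and your transpose step is fine. The pseudoinverse step, however, cannot succeed. You commit to the candidate $Q=\PX{\ocompl{V}}\PX{\ocompl{U}}$ exactly as displayed and explicitly decline to alter it, but this $Q$ is not $\PUVo\pinv$.

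Use the convention you correctly pin down ($\PX{X}$ is the orthogonal projection onto $X$) and set $P=\PUVo$. Since $P$ is the identity on $\ocompl{V}$, we have $P\PX{\ocompl{V}}=\PX{\ocompl{V}}$. Since $\PX{\ocompl{U}}$ shares the kernel $U$ with $P$, we have $\PX{\ocompl{U}}P=\PX{\ocompl{U}}$. Hence $PQ=QP=Q$, and nothing collapses to an orthogonal projector. Condition (MP3) would require $\PX{\ocompl{V}}$ and $\PX{\ocompl{U}}$ to commute, and (MP1) would require $Q=P$.

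Your consistency test would flag this: $\image Q\subseteq\ocompl{V}$ and $\kernel Q\supseteq U$, whereas $P\pinv$ must have image $\ocompl{U}$ and kernel $V$. Your fallback does not rescue it. Under the complementary reading, $Q$ becomes the orthogonal projection onto $V$ composed with the one onto $U$, whose image lies in $V$ instead of being $\ocompl{U}$. Concretely, take $d=2$, $U=\onevec\R$, $V=\ek[2]\R$:
\[
    \PUVo=\Id-\onevec\ek[2]\T=\begin{pmatrix} 1 & -1 \\ 0 & 0 \end{pmatrix},
    \qquad
    \PUVo\pinv=\tfrac12\begin{pmatrix} 1 & 0 \\ -1 & 0 \end{pmatrix}=\PX{\ocompl{U}}\PX{\ocompl{V}},
    \qquad
    \PX{\ocompl{V}}\PX{\ocompl{U}}=\tfrac12\begin{pmatrix} 1 & -1 \\ 0 & 0 \end{pmatrix}.
\]

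The missing step is to reverse the factors. This is what the paper's proof actually verifies; the displayed statement has them swapped. For $Q'=\PX{\ocompl{U}}\PX{\ocompl{V}}$, the absorption identities ($P\PX{\ocompl{U}}=P$, $P\PX{\ocompl{V}}=\PX{\ocompl{V}}$, $\PX{\ocompl{V}}P=P$, $\PX{\ocompl{U}}P=\PX{\ocompl{U}}$) give $PQ'=\PX{\ocompl{V}}$ and $Q'P=\PX{\ocompl{U}}$. These are the orthogonal projections onto $\image P$ and $\image Q'$, so \cref{lemma:MPInverseCharacterization} concludes.

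The same problem affects your last step. Transposing the correct formula (equivalently, applying it to $P\T=\PVUo$) gives $\PUVo\pinvT=\PX{\ocompl{V}}\PX{\ocompl{U}}$, which is precisely the matrix you were trying to verify as $\PUVo\pinv$. No identification of subscripts turns this into $\PX{U}\PX{V}$. In the example, $\PX{U}\PX{V}=\tfrac12\begin{pmatrix} 0 & 1 \\ 0 & 1 \end{pmatrix}\neq\PUVo\pinvT$, so the third displayed identity does not hold as written either. You should verify and state the pseudoinverse and its transpose in these corrected forms, rather than appeal to a change of convention.
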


In order to illustrate \cref{lemma:pvwpvw},
we provide the following commutative diagram.
The illustrated relationships are rather simple,
but since the pattern of how the projections interact shows up
repeatedly in the following, we include it to make this part of later diagrams
more intuitive.
\begin{figure}[H]
    \centering
    \providebool{useXdiagrams}
    \setbool{useXdiagrams}{true}
    \inputtikz{commutesVectors}
    \caption{%
        Illustration of \cref{lemma:pvwpvw}.
        Arrows indicate a mapping from the set at its tail to the set at its head.
        Labels are not labelled individually,
        since, for a generic subspace $S$,
        all maps onto $\ocompl{S}$ are projections $\PXYo{U}{S}$.
        Commutativity implies that any two directed paths
        with the same start and end point define the same map.
        For example, comparing the direct path from $\Rd$ to $\ocompl{V}$
        with the path from $\Rd$ to $\ocompl{W}$ and then to $\ocompl{V}$,
        yields the relationship $\PUVo \PUWo = \PUVo$.
        Comparing the path from $\ocompl{W}$ to $\ocompl{V}$ and back
        to the empty path,
        implies that $\PUWo$ is the inverse of $\PUVo$ when restricting
        the domain and codomain accordingly.
    }
    \label{fig:commutesVectors}
\end{figure}

As pointed out in \cref{eq:covarianceOfProjection},
the covariance projection $\pUVo$
from \cref{def:pivw}
of a covariance matrix $A = \E(YY\T)$
can be expressed as the covariance of the projected random vector $\PUVo Y$.
Another, purely algebraic, construction can be obtained by
considering a matrix $X$ such that $XX\T = A$.
We then have
\begin{equation}
    \label{eq:identityGramProjection}
    \pUVo(A)
    =
    \PUVo XX\T \PVUo
    =
    (\PUVo X)(\PUVo X)\T
    .
\end{equation}
Similarly, the pseudoinverse of $A$ can be expressed in terms of
the transposed pseudoinverse of $X$,
using the identity
\begin{equation}
    \label{eq:identityGramPseudoinverse}
    (XX\T)\pinv
    =
    X\pinvT X\pinv
\end{equation}
Note that for general matrices $A$ and $B$, $(AB)\pinv \neq B\pinv A\pinv$.
We therefore proceed by establishing the relationships between projections
and pseudoinverses for matrices $X$ with suitable images.
We define the relevant sets of matrices,
corresponding to $\setpsd{U}$ and $\setpsdt{U}$ in \cref{sec:obliqueProjections},
as follows.

For a subspace $V$ and $k \geq \dim(\ocompl{V})$,
let
\begin{align}
    \setMv{V}
    &=
    \setm{X \in \Rd[d \times k]}{\image(X) = \ocompl{V}}
    \label{eq:setMvU}
    , \\
    \setMvt{V}
    &=
    \setm{X \in \Rd[d \times k]}{\image(X) + V = \Rd}
    .
    \label{eq:setMvtU}
\end{align}
Note that the value of $k$ is not important as long as it is large enough to allow for the required images,
which is why we do not specify it in the notation.
In statistical applications,
the matrix $X$ could be a data matrix with $k$ observations,
or the basis of the support of a rank-deficient distribution.
The set $\setMv{V}$ consists of matrices whose image is exactly the complement of $V$,
while the set $\setMvt{U}$ consists of any matrix whose image is ``large enough''
to span the whole space together with $U$.
In particular, $\setMvt{U}$ is a superset of $\setMv{V}$ for any $\ocompl{V}$ complementary to $U$.
We have the following relationships between projections, transposes, and pseudoinverses
for matrices in these sets.
\begin{lemma}
    \provenlabel{lemma:projectionMatrices}
    Let $V, W$ be complementary subspaces to $\ocompl{U}$.
    For any $X \in \setMvt{U}$, we have
    $\PUVo X \in \setMv{V}$.
    When restricting the domain and codomain to $\setMv{W}$ and $\setMv{V}$, respectively,
    the mapping $\PUVo$ is a bijection with inverse
    \begin{equation}
        \lr{\PUVo\restrictto{\setMv{W}}}\inv
        =
        \PUWo\restrictto{\setMv{V}}
        .
        \label{eq:projectionRestrictedInverseMatrices}
    \end{equation}
    For $X \in \setMv{U}$, we have
    \begin{align}
        \lr{\PUVo X}\pinv
        &=
        X\pinv \PVVo
        \in
        \setMv{V}\T
        \label{eq:projectionMPInv1}
        , \\
        \lr{\PUVo X}\pinvT
        &=
        \PVVo X\pinvT
        \in
        \setMv{V}
        \label{eq:projectionMPInv1T}
        .
    \end{align}
    For $X \in \setMv{V}$, we have
    \begin{align}
        \lr{\PUUo X}\pinv
        &=
        X\pinv \PUVo
        \in
        \setMv{U}\T
        \label{eq:projectionMPInv2}
        , \\
        \lr{\PUUo X}\pinvT
        &=
        \PVUo X\pinvT
        \in
        \setMv{U}
        \label{eq:projectionMPInv2T}
        .
    \end{align}
\end{lemma}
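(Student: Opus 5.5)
The plan is to treat the two structural claims (image and bijectivity) by applying \cref{lemma:pvwpvw} column by column. The pseudoinverse formulas \cref{eq:projectionMPInv1,eq:projectionMPInv2} are then checked directly against the four \MP{} conditions of \cref{def:MPInverse}. Throughout, $\PXY{U}{\ocompl{U}}$ denotes the orthogonal projector onto $\ocompl{U}$. Two facts about projectors are used repeatedly:
\begin{enumerate}
    \item projectors with a common kernel reduce to the left-most one (\cref{lemma:pvwpvw});
    \item a projector acts as the identity on its own image, so $P Q = Q$ whenever $\image(Q)$ is contained in the image of $P$.
\end{enumerate}

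\textbf{Image and bijectivity.} For $X \in \setMvt{U}$, the identity $\PUVo U = \set{\zerovec}$ gives
\begin{equation*}
    \image(\PUVo X) = \PUVo\lr{\image(X)} = \PUVo\lr{\image(X) + U} = \PUVo \Rd = \ocompl{V},
\end{equation*}
so $\PUVo X \in \setMv{V}$. By \cref{lemma:pvwpvw}, $\PUVo\restrictto{\ocompl{W}}$ is a linear isomorphism $\ocompl{W} \to \ocompl{V}$ with inverse $\PUWo\restrictto{\ocompl{V}}$. Applied columnwise, it therefore maps matrices with image exactly $\ocompl{W}$ bijectively onto matrices with image exactly $\ocompl{V}$, and the inverse is \cref{eq:projectionRestrictedInverseMatrices}.

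\textbf{First pseudoinverse formula.} Let $X \in \setMv{U}$, $Y = \PUVo X$ and $Z = X\pinv \PVVo$. Recall $X X\pinv = \PXY{U}{\ocompl{U}}$, the orthogonal projector onto $\image(X) = \ocompl{U}$, and that $X\pinv$ vanishes on $U$, so $X\pinv = X\pinv \PXY{U}{\ocompl{U}}$. The four conditions are then checked as follows.
\begin{enumerate}
    \item $YZ$: we have $YZ = \PUVo \PXY{U}{\ocompl{U}} \PVVo = \PUVo \PVVo$ by the common kernel $U$. The product then reduces to $\PVVo$, because $\image(\PVVo) = \ocompl{V}$ and $\PUVo$ is the identity there. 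Since $\PVVo$ is orthogonal, $YZ$ is symmetric.
    \item $ZY$: we have $ZY = X\pinv \PVVo \PUVo X = X\pinv \PUVo X$, since $\PVVo$ is the identity on $\ocompl{V}$. Inserting the orthogonal projector gives $X\pinv \PXY{U}{\ocompl{U}} \PUVo X = X\pinv \PXY{U}{\ocompl{U}} X = X\pinv X$ by the common kernel $U$. This is symmetric.
    \item $YZY = \PVVo Y = Y$, since $\image(Y) = \ocompl{V}$.
    \item $ZYZ = X\pinv X X\pinv \PVVo = Z$.
\end{enumerate}
The transposed formula \cref{eq:projectionMPInv1T} follows by transposition, using $\PVVo\T = \PVVo$. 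Its membership in $\setMv{V}$ follows because $\image(Y\pinvT) = \image(Y) = \ocompl{V}$.

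\textbf{Second pseudoinverse formula.} Let $X \in \setMv{V}$, $Y = \PUUo X$ and $Z = X\pinv \PUVo$. The same template applies with the roles swapped.
\begin{enumerate}
    \item $YZ$: here $YZ = \PUUo \PVVo \PUVo$. Since $\PVVo$ is the identity on $\image(\PUVo) = \ocompl{V}$, this equals $\PUUo \PUVo$, which is $\PUUo$ by the common kernel $U$. This is symmetric.
    \item $ZY$: here $ZY = X\pinv \PUVo \PUUo X = X\pinv \PUVo X$ by the common kernel $U$. This equals $X\pinv X$, since $\PUVo$ fixes $\image(X) = \ocompl{V}$.
    \item $YZY = \PUUo Y = Y$, since $\image(Y) = \ocompl{U}$.
    \item $ZYZ = X\pinv X X\pinv \PUVo = Z$.
\end{enumerate}
Transposing and using $\PUVo\T = \PVUo$ from \cref{lemma:projectorMPInv} gives \cref{eq:projectionMPInv2T}, with image $\ocompl{U}$.

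The computations themselves are short. I expect the main obstacle to be the bookkeeping in the $YZ$ and $ZY$ checks: at each step one must decide whether the reduction is by common kernel or by ``identity on image''. One must also first write $X X\pinv$ and $X\pinv$ in terms of the orthogonal projector $\PXY{U}{\ocompl{U}}$, or $\PVVo$ in the second case, so that \cref{lemma:pvwpvw} becomes applicable. Getting these insertions in the right places is where an error is most likely.
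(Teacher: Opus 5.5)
Your proposal is correct and follows the paper's proof essentially step for step. You obtain the image claim via $\PUVo(\image X + U) = \PUVo \Rd$ and bijectivity from \cref{lemma:pvwpvw}; you get the pseudoinverse formulas by verifying the four \MP{} conditions after rewriting $XX\pinv$ and $X\pinv$ through the orthogonal projector, and the transposed versions by transposition. The only difference is that you write out the case $X \in \setMv{V}$ in full, which the paper merely calls ``similar computations'', and those reductions are also correct.
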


Again, we illustrate \cref{lemma:projectionMatrices} with a commutative diagram
in \cref{fig:illustrationProjectionMatrices}.
The top row of the diagram is essentially the same as \cref{fig:commutesVectors},
replacing subspaces with the corresponding sets of matrices.
The second row can also be left out by combining the
transposed pseudoinverse into a single mapping.
Notably,
the simple structure of the top row,
where each map is simply a projection along $U$ onto the target image,
is not preserved for the mappings between (transposed) pseudoinverses.
Instead, a more varied set of projections appears,
and mappings between $\setMv{V}$ and $\setMv{W}$ for $V, W \neq U$
do not simplify to single projections at all.
\begin{figure}
    \centering
    \inputtikz{commutesX}
    \caption{%
        Commutative diagram illustrating the relations in \cref{lemma:projectionMatrices}.
        All maps in the top row are projections $\PXYo{U}{S}$,
        with target set $\setMv{S}$.
        Double-headed arrows indicate bijections.
        The transpose $\cdot\T$ and pseudoinverse $\cdot\pinv$
        are self-inverses,
        otherwise the map in each direction is labelled
        on the side of the corresponding arrowhead.
    }
    \label{fig:illustrationProjectionMatrices}
\end{figure}

Using the identities \cref{eq:identityGramProjection,eq:identityGramPseudoinverse},
we can immediately translate the results in \cref{lemma:projectionMatrices}
to the covariance projection $\pUVo$ and its pseudoinverse $\pUVo\pinv$,
yielding \cref{lemma:propertiesSymmetricProjection}.
Again, we illustrate the relationships in the \namecref{lemma:propertiesSymmetricProjection}
with a commutative diagram in \cref{fig:commutesUVpsd}.

\begin{figure}
    \centering
    \inputtikz{commutesUVpsd}
    \caption{%
        Illustration of \cref{lemma:propertiesSymmetricProjection}.
    }
    \label{fig:commutesUVpsd}
\end{figure}

%% file: tex/A_TechnicalDetails.tex
\begin{definition}
    \label{def:MPInverse}
    The \MP{} inverse of a matrix $A \in \Rd[n \times m]$
    is the unique matrix $B \in \Rd[m \times n]$ satisfying the following four equations:
    \begin{align}
        ABA &= A
        \label{eq:MP1}
        \tag{MP1}
        , \\
        BAB &= B
        \label{eq:MP2}
        \tag{MP2}
        , \\
        (AB)\T &= AB
        \label{eq:MP3}
        \tag{MP3}
        , \\
        (BA)\T &= BA
        \label{eq:MP4}
        \tag{MP4}
        .
    \end{align}
    This matrix is denoted by $A\pinv$ and also known as the pseudoinverse of $A$.
\end{definition}

\begin{lemma}
    \provenlabel{lemma:MPInverseCharacterization}
    The following two conditions are sufficient and necessary for $B$ to be the \MP{} inverse of $A$:
    \begin{align*}
        AB
        &=
        \PX{\image(A)}
        , \\
        BA
        &=
        \PX{\image(B)}
        .
    \end{align*}
\end{lemma}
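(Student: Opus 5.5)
We prove the two directions separately, using that $\PX{S}$ denotes the orthogonal projection onto $S$ and that MP1--MP4 characterize $A\pinv$ uniquely.

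\emph{Necessity.} Suppose $B = A\pinv$. By MP3, $AB$ is symmetric, and by MP1, $(AB)^2 = (ABA)B = AB$. Hence $AB$ is a symmetric idempotent, i.e.\ an orthogonal projection, namely onto its image. That image is $\image(A)$:
\begin{itemize}
\item $\image(AB) \subseteq \image(A)$ trivially;
\item $\image(A) = \image(ABA) \subseteq \image(AB)$ by MP1.
\end{itemize}
So $AB = \PX{\image(A)}$. The same argument applies to $BA$. MP4 gives symmetry, MP2 gives $(BA)^2 = B(ABA)$ rewritten as $(BAB)A = BA$, and $\image(B) = \image(BAB) \subseteq \image(BA) \subseteq \image(B)$. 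Hence $BA = \PX{\image(B)}$.

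\emph{Sufficiency.} Assume $AB = \PX{\image(A)}$ and $BA = \PX{\image(B)}$.
\begin{itemize}
\item MP3 and MP4 hold immediately, since orthogonal projectors are symmetric.
\item MP1: every column of $A$ lies in $\image(A)$, so $\PX{\image(A)} A = A$, i.e.\ $ABA = A$.
\item MP2: every column of $B$ lies in $\image(B)$, so $\PX{\image(B)} B = B$, i.e.\ $BAB = B$.
\end{itemize}
By uniqueness of the \MP{} inverse, $B = A\pinv$.

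The only step requiring care is identifying the image of $AB$ (respectively $BA$) with $\image(A)$ (respectively $\image(B)$). This is settled by the inclusion chains above using MP1 and MP2. Everything else is direct.
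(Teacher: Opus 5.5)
Your proof is correct, and your sufficiency argument is exactly the paper's: orthogonal projectors are symmetric, which gives MP3 and MP4, and $\PX{\image(A)}$ and $\PX{\image(B)}$ fix the columns of $A$ and $B$, which gives MP1 and MP2. The paper's proof stops there and never argues necessity, even though the lemma claims it. Your first part supplies that missing direction: you show $AB$ and $BA$ are symmetric idempotents using MP1--MP4, and your inclusion chains identify their images as $\image(A)$ and $\image(B)$.
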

\begin{proof}[\proofref{lemma:MPInverseCharacterization}]
    Since orthogonal projections are symmetric,
    the two conditions imply \cref{eq:MP3,eq:MP4}.
    Furthermore, an orthogonal projection has no effect on vectors from its image,
    so the two conditions also imply \cref{eq:MP1,eq:MP2}.
\end{proof}

%% file: tex/A_Proofs.tex
\subsection{Proofs for \cref{sec:obliqueProjections} and \cref{sec:obliqueProjectionsDetails}}
\label{sec:proofsProjections}

\cref{sec:obliqueProjectionsDetails} provides
additional details and intermediate results
about the oblique projections discussed in \cref{sec:obliqueProjections}.
Below we provide the proofs for both sections together,
ordered such that each proof only relies on results proven before it.

\begin{proof}[\proofref{lemma:projectionFormula}]
    To show that $(B\T A)$ is invertible,
    assume that $\zerovec = (B\T A) z$,
    implying $Az \in \kernel(B\T) = U$.
    Since also $Az \in \image A = \ocompl{V}$,
    we have $Az \in U \cap \ocompl{V} = \set{\zerovec}$,
    by the assumption that $U$ and $\ocompl{V}$ are complementary subspaces.
    Since the columns of $A$ are a basis, this implies $z = \zerovec$,
    and hence that $(B\T A)$ is invertible.

    To show that $\PUVo$ is the oblique projection along $U$ onto $V$,
    let $v = Az \in \image A = \ocompl{V}$ and
    $u \in \kernel(B\T) = U$,
    and verify the conditions from \cref{def:obliqueProjection}:
    \begin{align*}
        \PUVo v
        &=
        A (B\T A)\inv B\T Az
        =
        A z
        =
        v
        , \\
        \PUVo u
        &=
        A (B\T A)\inv B\T u
        =
        \zerovec
        .
    \end{align*}

    The formula for the one-dimensional case
    can be verified by direct computation.
    With $z \in \ocompl{v}$, we have
    \begin{align*}
        \lr{\Id - \frac{uv\T}{v\T u}} u
        &=
        u - u \frac{v\T u}{v\T u}
        =
        \zerovec
        , \qquad
        \lr{\Id - \frac{uv\T}{v\T u}} z
        =
        z - u \frac{v\T z}{v\T u}
        =
        z
        .
    \end{align*}
    Alternatively, one can use the identity $\PUVo = \Id - \PXY{\ocompl{V}}{U}$
    and plug in $u$ and $v$ for $A$ and $B$ above.
\end{proof}

\begin{proof}[\proofref{lemma:projectorMPInv}]
    For the transpose $\PUVo\T$,
    we inherit idempotency from $\PUVo$,
    and the kernel and image satisfy
    \begin{align*}
        \kernel(\PUVo\T)
        &=
        \image(\PUVo)\orth
        =
        V
        , \\
        \image(\PUVo\T)
        &=
        \kernel(\PUVo)\orth
        =
        \ocompl{U}
        ,
    \end{align*}
    implying, $\PUVo\T = \PVUo$.
    We repeatedly use \cref{lemma:pvwpvw} to obtain
    \begin{align*}
        \PUVo
        \lr{
            \PX{\ocompl{U}}
            \PX{\ocompl{V}}
        }
        &=
        \PUVo
        \PXY{U}{\ocompl{U}}
        \PXY{V}{\ocompl{V}}
        =
        \PXY{V}{\ocompl{V}}
        , \\
        \lr{
            \PX{\ocompl{U}}
            \PX{\ocompl{V}}
        }
        \PUVo
        &=
        \PXY{U}{\ocompl{U}}
        \PXY{V}{\ocompl{V}}
        \PUVo
        =
        \PXY{U}{\ocompl{U}}
        ,
    \end{align*}
    which are orthogonal projections onto the image of $\PUVo$ and
    $\PX{\ocompl{U}}\PX{\ocompl{V}}$,
    respectively.
    Hence, by \cref{lemma:MPInverseCharacterization} we have
    $\PUVo\pinv = \PX{\ocompl{U}}\PX{\ocompl{V}}$.
    The identity $\PUVo\pinvT = \PX{U}\PX{V}$ follows directly from the first two.
\end{proof}

\begin{proof}[\proofref{lemma:pvwpvw}]
    For the first equation, let $x \in \Rd$ and set $y = \PUWo x$.
    Then $y \in W\orth$ and $x - y \in U$.
    Since $\PUVo$ vanishes on $U$, we obtain
    \begin{align*}
        \PUVo \PUWo x
        &=
        \PUVo y
        \\
        &=
        \PUVo\lr{y + (x-y)}
        \\
        &=
        \PUVo x
        .
    \end{align*}
    Hence, $\PUVo\PUWo = \PUVo$.
    Taking transposes gives
    \begin{align*}
        \PVUo\PWUo
        &=
        \PWUo
        .
    \end{align*}
    The inverse relationship follows from the fact that $\PUWo$ is the identity on $W\orth$,
    and $\PUVo$ is the identity on $V\orth$.
\end{proof}

\begin{proof}[\proofref{lemma:projectionMatrices}]
    First, we show $\PUVo X \in \setMv{V}$
    for $X \in \setMvt{U}$.
    Using the fact that $U$ is the kernel of $\PUVo$, we have
    \begin{align*}
        \image \PUVo X
        &=
        \PUVo (\image X)
        \\ &=
        \PUVo (\image X + U)
        \\ &=
        \PUVo \Rd
        \\ &=
        \ocompl{V}
        ,
    \end{align*}
    and hence $\PUVo X \in \setMv{V}$.

    To show that $\PUVo\restrictto{\setMv{W}}$ is a bijection,
    note that \cref{lemma:pvwpvw} yields $\PUWo\PUVo = \PUWo$,
    which is the identity on $\setMv{W}$.

    To show \cref{eq:projectionMPInv1},
    let $A = \PUVo X$ and $B = X\pinv \PVVo$.
    From the properties of the \MP{} inverse and $X \in \setMv{U}$,
    we have $XX\pinv = \PUUo$ and $X\pinv = X\pinv X X\pinv = X\pinv \PUUo$.
    For these $A$, $B$ we verify the \MP{} conditions
    by repeatedly applying \cref{lemma:pvwpvw} to yield
    \begin{align*}
        AB
        &=
        \PUVo X X\pinv \PVVo
        \\ &=
        \PUVo \PUUo \PVVo
        \\ &=
        \PVVo
        \\
        BA
        &=
        X\pinv \PVVo \PUVo X
        \\ &=
        X\pinv \PUUo \PVVo \PUVo X
        \\ &=
        X\pinv \PUUo X
        \\ &=
        X\pinv X
        \\
        ABA
        &=
        \PVVo \PUVo X
        =
        \PUVo X
        =
        A
        \\
        BAB
        &=
        X\pinv \PVVo \PVVo
        =
        X\pinv \PVVo
        =
        B
        .
    \end{align*}
    The matrix $(\PUVo X)\pinv$ must be in $\setMv{V}\T$,
    since we have $\PUVo X \in \setMv{V}$ and the transpose of the \MP{}
    has the same image as the original matrix.

    The relationship \cref{eq:projectionMPInv2} follows from similar computations for
    $A = \PUUo X$ and $B = X\pinv \PUVo$.
\end{proof}

\begin{proof}[\proofref{lemma:propertiesSymmetricProjection}]
    To show that $\pUVo(A) \in \setpsd{V}$ for $A \in \setpsdt{U}$,
    we write $A = XX\T$ with $X \in \setMvt{U}$ and
    use the corresponding statement from \cref{lemma:projectionMatrices}
    to obtain $\PUVo X \in \setMv{V}$,
    which implies $\pUVo(A) = \PUVo XX\T \PVUo \in \setpsd{V}$.

    To show \cref{eq:symmetricPvwpvw}, observe
    \begin{align*}
        \pUVo
        \pUWo
        (A)
        &=
        (\PUVo \PUWo) A (\PWUo \PVUo)
        \\ &=
        \PUVo A \PVUo
        \\ &=
        \pUVo(A)
        .
    \end{align*}
    For the bijectivity relationship in \cref{eq:symmetricProjectionRestrictedInverse},
    we write $A \in \setpsd{W}$ as $A = XX\T$ with $X \in \setMv{W}$,
    yielding
    \begin{align*}
        \pUWo \pUVo (A)
        &=
        \pUWo (A)
        =
        \PUWo X \lr{\PUWo X}\T
        =
        XX\T
        =
        A
        .
    \end{align*}
    The other direction and the pseudoinverse relationships follow similarly.
    For \cref{eq:projectionMPInvPSD1}
    consider $A = XX\T \in \setMv{U}$ with $X \in \setMv{U}$,
    yielding
    \begin{align*}
        \pUVo(A)\pinv
        &=
        \lr{(\PUVo X)(\PUVo X)\T}\pinv
        \\ &=
        \lr{\PUVo X}\pinvT \lr{\PUVo X}\pinv
        \\ &=
        \lr{\PVVo X\pinvT} \lr{X\pinv \PVVo}
        \\ &=
        \PVVo X\pinvT X\pinv \PVVo
        \\ &=
        \pVVo(A\pinv)
        .
    \end{align*}
    In the second and the last equality, we use $(XX\T)\pinv = X\pinvT X\pinv$,
    in the third equality we use \cref{eq:projectionMPInv1,eq:projectionMPInv1T}.
    Equation \cref{eq:projectionMPInvPSD2} follows with a similar decomposition,
    using \cref{eq:projectionMPInv2,eq:projectionMPInv2T}.
\end{proof}

\subsection{Proofs for \cref{sec:covarianceStructures}}

\begin{proof}[\proofref{prop:variogramProperties}]
    \cref{prop:variogramProperties1}:
    Note that $\vario(A)$ is always conditionally negative semidefinite for \psd{} matrices $A$,
    the statement only concerns the strictness of this property.
    Let $A \in \setpsdt{\onevec}$,
    meaning $\image A + \onevec\R = \Rd$,
    and let $z \neq \zerovec$ be such that $z\T\onevec = 0$.
    Then
    \begin{align*}
        z\T\vario(A)z
        &=
        z\T\lr{
            \onevec \diag\lr{A}\T + \diag\lr{A}\onevec\T - 2A
        }
        \\ &=
        -2 z\T A z
        .
    \end{align*}
    Since $A$ is positive semidefinite, this term is non-positive.
    To prove strict negativity,
    assume that $z\T A z = 0$.
    Then $Az = \zerovec$, and hence $\image A \subseteq z\orth$.
    Since $z\perp\onevec$, we have $\image A + \onevec\R \subseteq z\orth \subsetneq \Rd$,
    contradicting the assumption that $A \in \setpsdt{\onevec}$.

    To show that $\vario(A) \notin \setcnd$ for $A \in \setpsdany \setminus \setpsdt{\onevec}$,
    consider $A \in \setpsdany$ such that $\image A + \onevec\R \subsetneq \Rd$.
    Then there exists $z \neq \zerovec$ such that $z\T\onevec = 0$ and $Az = \zerovec$,
    yielding $z\T\vario(A)z = 0$.

    \cref{prop:variogramProperties2}:
    For any $A \in \Rdd$ and w.l.o.g. $\onevec\T v = 1$, we have
    \begin{align*}
        \vario\lr{\pevo(A)}
        &=
        \vario\lr{\Pevo A \Pveo}
        \\ &=
        \vario\lr{(\Id - \onevec v\T) A (\Id - v \onevec\T)}
        \\ &=
        \vario(A)
        - \vario(\onevec v\T A)
        - \vario(A v \onevec\T)
        + \vario(\onevec v\T A v \onevec\T)
        \\ &=
        \vario(A)
        - \vario(\onevec (Av)\T)
        - \vario((Av) \onevec\T)
        +  v\T A v \vario(\onevec\onevec\T)
        \\ &=
        \vario(A)
        ,
    \end{align*}
    using $\vario(\onevec\onevec\T) = 0$ and $\vario(\onevec b\T) + \vario(b \onevec\T) = 0$ for any $b \in \Rd$.

    \cref{prop:variogramProperties3}:
    We first show the invariance statement for $A \in \Rdd$:
    \begin{align*}
        -\half \pevo(\vario(A))
        &=
        -\half
        \Pevo
        \lr{
            \onevec \diag\lr{A}\T + \diag\lr{A}\onevec\T - 2A
        }
        \Pveo
        \\ &=
        -\half
        \Pevo
        \lr{
            -2A
        }
        \Pveo
        \\ &=
        \Pevo A \Pveo
        =
        \pevo(A)
        .
    \end{align*}
    The statement about $\Sigma^v \in \setpsd{v}$ then follows,
    since we have $\pevo(\Sigma^v) = \Sigma^v$.

    Next, we show the stated bijection between $\setcnd$ and $\setpsd{v}$.
    We know from \cref{prop:variogramProperties1} and the relation
    $\setpsd{v} \subseteq \setpsdt{\onevec}$ that $\vario(\Sigma^v) \in \setcnd$.
    To show that $-\half\pevo(\Gamma) \in \setpsd{v}$ for $\Gamma \in \setcnd$,
    consider $x \in \Rd$ and observe
    \begin{align*}
        x\T \pevo(-\half\Gamma) x
        &=
        -\half x\T \Pevo \Gamma \Pveo x
        \\ &=
        -\half (\Pveo x)\T \Gamma (\Pveo x)
        ,
    \end{align*}
    which is non-negative since $\Gamma$ is \cnd{}
    and $\onevec \perp \Pveo x$.
    Furthermore, it is zero only if $\Pveo x = \zerovec$,
    which happens if and only if $x \in v\R$.
    Hence, the kernel of the matrix is $v\R$,
    and it is in $\setpsd{v}$.
    Lastly, we use \cref{prop:variogramProperties2} to obtain
    \begin{align*}
        \vario(-\half\pevo(\Gamma))
        &=
        \vario(-\half\Gamma)
        =
        \Gamma
        .
    \end{align*}

    \cref{prop:variogramProperties4}:
    These statements are a direct consequence of \cref{lemma:propertiesSymmetricProjection}
    with $U = \onevec\R$ and $V = v\R$.
\end{proof}

\begin{proof}[\proofref{cor:covarianceStructuresEVT}]
    The stated relationships follow immediately
    when replacing $\peeo$ and $\pexo{\ek}$ by their definition
    from \cref{def:pivw}.
\end{proof}

\begin{proof}[\proofref{lemma:logratiosAsProjections}]
    Let $y$ denote the vector of additive log-ratios $\ALR{j}$,
    extended by $y_j = 0$.
    We have
    \begin{align*}
        y_i
        &=
        \log x_i - \log x_j
        , \quad
        i \in \set{1, \dots, d}
        , \quad
        \text{with}
        \\
        x
        &=
        \cl(w)
        .
    \end{align*}
    Rearranging yields
    \begin{align*}
        y
        &=
        (\log x) - \onevec e_j\T (\log x)
        \\ &=
        \Pexo{e_j} \log x
        \\ &=
        \Pexo{e_j} \log \cl(w)
        \\ &=
        \Pexo{e_j} \lcl(\log w)
        \\ &=
        \Pexo{e_j} \log w
        ,
    \end{align*}
    which is the stated relation for $\ALR{j}$.
    The other relations follow similarly.
\end{proof}

\begin{proof}[\proofref{cor:covarianceStructuresCoDA}]
    For the compositional variation array $\Gamma$
    observe that
    \begin{align*}
        \Gamma_{ij}
        &=
        \Var(\log x_i - \log x_j)
        =
        \Var(\log w_i - \log w_j)
        =
        \Sigma_{ii} + \Sigma_{jj} - 2\Sigma_{ij}
        =
        \vario(\Sigma)_{ij}
        .
    \end{align*}
    The statements about the (centred) log-ratio covariance
    follow from the projection expressions in \cref{lemma:logratiosAsProjections},
    and the identity in \cref{eq:covarianceOfProjection}.
\end{proof}

\subsection{Proofs for \cref{sec:intrinsicGraphicalModels}}

\begin{proof}[\proofref{lemma:logisticNormalALR}]
    By \cref{cor:covarianceStructuresCoDA},
    the covariance matrix of $Y\slr{k} = \ALR{k}(X)$ is obtained
    by removing the $k$-th row and column from
    \begin{equation*}
        \Sigma^{\ek}
        =
        \pexo{\ek}(\Sigma\sonevec)
        .
    \end{equation*}
    By \cref{prop:variogramProperties} and \cref{remark:inverseDroppingZeros},
    the precision matrix $\Theta\slr{k}$ of $Y\slr{k}$ is obtained by
    removing the $k$-th row and column from
    \begin{equation*}
        \Theta^{\ek}
        =
        \pxxo{\ek}(\Theta\sonevec)
        .
    \end{equation*}
    The projection along $\ek$ only changes the values in the $k$-th row and column,
    which are dropped anyway,
    implying that $\Theta\slr{k}_{ij} \Theta\slr{k} = \Theta\sonevec_{ij}$ for all $i,j \neq k$.
    In particular,
    all zeros in $\Theta\sonevec$ are preserved in $\Theta\slr{k}$,
    and $Y\slr{k}$ is a Gaussian graphical model on
    the subgraph of $G$ obtained by removing node $k$ and its incident edges.
\end{proof}

%% file: tex/A_Notation.tex
Throughout the paper,
the notation used for covariance matrices, precision matrices, and variograms
is mostly consistent with literature on extreme value theory,
in particular \citet{engelke2020,engelke2024,hentschel2023,hentschel2025}.
Minor notational differences are
that the matrices here denoted $\Sigma\sonevec$ and $\Theta\sonevec$,
are often shortened to $\Sigma$ and $\Theta$ in the extreme value theory literature,
and that the matrices $\Sigma^{\ek} \in \Rdd$ are often denoted $\tilde\Sigma\slr{k}$.

In compositional data literature, the notation is often different.
While there is no standard notation for covariance matrices, precision matrices, and variograms,
we provide a comparison between our notation
(used e.g. in \cref{sec:backgroundCoDAAitchison,sec:covarianceStructuresCoDA})
and the one used in the standard reference \citet{aitchison_statistical_1986}.
Below, underlined symbols are used to denote the notation from \citet{aitchison_statistical_1986},
while non-underlined symbols follow the notation used in this paper.
\begin{alignat*}{6}
    &\text{Dimension:}
    &\quad
    \ai{D}
    &=
    d
    &&\in \N
    \\
    &\text{Effective dimension:}
    &\quad
    \ai{d}
    &=
    d-1
    &&\in \N
    \\
    &\text{Log-ratio variance:}
    &\quad
    \ai{\tau_{ij}}
    &=
    \Var(x_i/x_j)
    &&\in
    \R
    \\
    &\text{Variation matrix:}
    &\quad
    \mathbf{\ai{T}}
    &=
    \ai{\brackets{\tau_{ij}}}
    =
    \Gamma
    &&\in
    \Rdd
    \\
    &\text{Log-ratio covariance matrix:}
    &\quad
    \mathbf{\ai{\Sigma}}
    &=
    \ai{\brackets{\sigma_{ij}}}
    =
    \Sigma\slr{d}
    &&\in
    \Rdd[(d-1)]
    \\
    &\text{Centred log ratio covariance matrix:}
    &\quad
    \mathbf{\ai{\Gamma}}
    &=
    \ai{\brackets{\gamma_{ij}}}
    =
    \Sigma\sonevec
    &&\in
    \Rdd
    \\
    &\text{Precision matrix:}
    &\quad
    \mathbf{\ai{\Sigma\inv}}
    &=
    \Theta\slr{d}
    &&\in
    \Rdd[(d-1)]
    \\
    &\text{Centred precision matrix:}
    &\quad
    \mathbf{\ai{\Gamma^{-}}}
    &=
    \Theta\sonevec
    &&\in
    \Rdd
    \\
    &\text{Covariance of a basis:}
    &\quad
    \mathbf{\ai{\Omega}}
    &=
    \ai{\brackets{\omega_{ij}}}
    =
    \Sigma
    &&\in
    \Rdd
    \\
    &\text{Centring matrix:}
    &\quad
    \mathbf{\ai{G_D}}
    &=
    \Peeo
    &&\in
    \Rdd
\end{alignat*}